\documentclass[prl, reprint,showpacs]{revtex4-1}

\usepackage{ fullpage, amssymb, amsmath, dsfont, graphicx, amsthm}
\usepackage{epstopdf} 


%
%
%


\usepackage{xy}
\xyoption{matrix}
\xyoption{frame}
\xyoption{arrow}
\xyoption{arc}

\usepackage{ifpdf}
\ifpdf
\else
\PackageWarningNoLine{Qcircuit}{Qcircuit is loading in Postscript mode.  The Xy-pic options ps and dvips will be loaded.  If you wish to use other Postscript drivers for Xy-pic, you must modify the code in Qcircuit.tex}
\xyoption{ps}
\xyoption{dvips}
\fi

\entrymodifiers={!C\entrybox}

\newcommand{\bra}[1]{{\left\langle{#1}\right\vert}}
\newcommand{\ket}[1]{{\left\vert{#1}\right\rangle}}
\newcommand{\qw}[1][-1]{\ar @{-} [0,#1]}
\newcommand{\qwx}[1][-1]{\ar @{-} [#1,0]}


\newcommand{\gate}[1]{*+<.6em>{#1} \POS ="i","i"+UR;"i"+UL **\dir{-};"i"+DL **\dir{-};"i"+DR **\dir{-};"i"+UR **\dir{-},"i" \qw}






\newcommand{\control}{*!<0em,.025em>-=-<.2em>{\bullet}}

\newcommand{\ctrl}[1]{\control \qwx[#1] \qw}



\newcommand{\multigate}[2]{*+<1em,.9em>{\hphantom{#2}} \POS [0,0]="i",[0,0].[#1,0]="e",!C *{#2},"e"+UR;"e"+UL **\dir{-};"e"+DL **\dir{-};"e"+DR **\dir{-};"e"+UR **\dir{-},"i" \qw}
\newcommand{\ghost}[1]{*+<1em,.9em>{\hphantom{#1}} \qw}



\newcommand{\lstick}[1]{*!R!<.5em,0em>=<0em>{#1}}


\newcommand{\Qcircuit}{\xymatrix @*=<0em>}



\renewcommand{\th}{^{\mathrm{th}}}
\newcommand{\id}{\mathds{1}}
\newcommand{\eq}[1]{(\ref{#1})}
\newcommand{\re}{\mathrm{Re}}
\newcommand{\braket}[2]{\langle #1|#2\rangle}
\newtheorem{theorem}{Theorem}

\newtheorem{definition}{Definition}
\newtheorem*{cor}{Corollary}


\begin{document}

\title{Grover search and the no-signaling principle}
\author{Ning Bao}
\affiliation{Institute for Quantum Information and Matter}
\affiliation{Walter Burke Institute for Theoretical Physics, California Institute of Technology 452-48, Pasadena, CA 91125, USA }
\author{Adam Bouland}
\affiliation{Computer Science and Artificial Intelligence Laboratory, Massachusetts Institute of Technology, Cambridge, MA 02139 USA}
\author{Stephen P. Jordan}
\affiliation{National Institute of Standards and Technology,
  Gaithersburg, MD, 20899}
\affiliation{Joint Center for Quantum Information and Computer Science, University of Maryland, College Park, MD 20742 USA}


\bibliographystyle{naturemag}

\begin{abstract}
Two of the key
properties of quantum physics are the no-signaling principle and the
Grover search lower bound. That is, despite admitting
stronger-than-classical correlations, quantum mechanics does not imply
superluminal signaling, and despite a form of exponential parallelism,
quantum mechanics does not imply polynomial-time brute force solution
of \textsf{NP}-complete problems. Here, we investigate the degree to which
these two properties are connected. We examine four classes of
deviations from quantum mechanics, for which we draw inspiration from
the literature on the black hole information paradox. We show that in these models, the
physical resources required to send a superluminal signal scale
polynomially with the resources needed to speed up Grover's
algorithm. 
Hence the
no-signaling principle is equivalent to the inability to solve \textsf{NP}-hard problems efficiently by brute force
within the classes of theories analyzed.

\end{abstract}

\pacs{03.67.-a, 03.65.-w, 04.70.Dy}
\maketitle

\section{Introduction}
Recently the firewalls paradox \cite{AMPS,Braunstein13} has shown that
our understanding of quantum mechanics and general relativity appear to be inconsistent at the event horizon of a black hole. Many of the leading
proposals to resolve the paradox involve modifying quantum
mechanics. For example, the final-state projection model of Horowitz
and Maldecena \cite{Horowitz_Maldecena} and the state dependence model
of Papadodimas and Raju \cite{PR14} are modifications to quantum
theory which might resolve the inconsistency.

One reason to be skeptical of such modifications of quantum mechanics
is that they can often give rise to superluminal signals, and hence
introduce acausality into the model. For example, Weinberg nonlinearities allow for superluminal signaling \cite{Gisin,Polchinski91}. This is generally seen as
unphysical. 
In contrast, in standard quantum theory, entanglement does not give rise to superluminal signaling.

Another startling feature of such models is that they might allow one to construct computers far more powerful even than conventional quantum computers.
In particular, they may allow one to solve \textsf{NP}-hard problems in polynomial time. 
 \textsf{NP}-hard problems refer to those problems for which the solution can be \emph{verified} in polynomial time, but for which there are exponentially many possible solutions.  
It is impossible for standard quantum computers to solve \textsf{NP}-hard problems efficiently by searching over all possible solutions. This is a consequence of the query complexity lower bound of Bennett, Bernstein, Brassard and Vazirani  \cite{BBBV},  which shows one cannot search an unstructured list of $2^n$ items in fewer than $2^{n/2}$ queries with a quantum computer. (Here a \emph{query} is an application of a function $f$ whose output indicates if you have found a solution. The \emph{query complexity} of search is the minimum number of queries to $f$, possibly in superposition, required to find a solution.) 
This bound is achieved by Grover's search algorithm \cite{Grover}. 
In contrast, many modifications of quantum theory allow quantum computers to search an exponentially large solution space in polynomial time. 
For example, quantum computers equipped with postselection  \cite{Aaronson_postBQP}, Deutschian closed timelike curves \cite{Brun, AaronsonCTC,
BennettCTC},  or nonlinearities \cite{Abrams_Lloyd, MW14, MW15, MW13, CY15}
all admit poly-time solution of \textsf{NP}-hard problems by brute force search. 

In this paper we explore the degree to which superluminal
signaling and speedups over Grover's algorithm are connected. We consider several
modifications of quantum mechanics which are inspired by resolutions
of the firewalls paradox. For each modification, we show that the
theory admits superluminal signaling if and only if it admits a query
complexity speedup over Grover search.
Furthermore, we establish a \emph{quantitative} relationship between superluminal signaling and speedups over Grover's algorithm.
More precisely, we show that if one can transmit one classical
bit of information superluminally using $n$ qubits and $m$ operations,
then one can speed up Grover search on a system of poly$(n,m)$ qubits
with poly$(n,m)$ operations, and vice versa. In other words, the
ability to send a superluminal signal with a reasonable amount of
physical resources is equivalent to the ability to violate the Grover
lower bound with a reasonable amount of physical resources. 
Therefore the
no-signaling principle is equivalent to the inability to solve \textsf{NP}-hard problems efficiently by brute force
within the classes of theories analyzed. 

Note that in the presence of nonlinear dynamics, density matrices are no longer equivalent to ensembles of pure states. Here, we consider measurements to produce probabilistic ensembles of post-measurement pure states and compute the dynamics of each of these pure states separately. Alternative formulations, in particular Everettian treatment of measurements as entangling unitaries, lead in some cases to different conclusions about superluminal signaling. See e.g. \cite{Deutsch}.

\section{Results}
We consider four modifications of quantum mechanics,
which are inspired by resolutions of the firewalls paradox. The first
two are ``continuous'' modifications in the sense that they have a
tunable parameter $\delta$ which quantifies the deviation from quantum
mechanics. The second two are ``discrete" modifications in which standard quantum mechanics is supplemented by one additional operation.

\subsection{Final state projection}

The first ``continuous" modification of quantum theory we consider is the final state projection model of
Horowitz and Maldecena \cite{Horowitz_Maldecena}, in which the black
hole singularity projects the wavefunction onto a specific quantum
state. This can be thought of as a projective measurement with
postselection, which induces a linear (but not necessarily unitary)
map on the projective Hilbert space. (In some cases it is possible for the
Horowitz-Maldecena final state projection model to induce a perfectly
unitary process $S$ for the black hole, but in general interactions between the collapsing body
and infalling Hawking radiation inside the event horizon 
induce deviations from unitarity \cite{Gottesman_Preskill}.)  Such linear but non-unitary maps allow both superluminal signaling and speedups over Grover
search.  Any non-unitary map $M$ of condition number
$1+\delta$ allows for superluminal signaling with channel capacity
$O(\delta^2)$ with a single application of $M$. The protocol for signaling is simple - suppose Alice has the ability to apply $M$, and suppose Alice and Bob share the entangled state
\begin{equation}\frac{1}{\sqrt{2}} \left( \ket{\phi_0} \ket{0} + \ket{\phi_1} \ket{1} \right).\end{equation}
where $\ket{\phi_0}$ and $\ket{\phi_1}$ are the minimum/maximum singular vectors of $M$, respectively.  If Alice chooses to apply $M$ or not, then Bob will see a change in his half of the state, which allows signaling with channel capacity $\sim \delta^2$. 
Furthermore, it is also possible for Bob to signal superluminally to Alice with the same state - if Bob chooses to measure or not to measure his half of the state, it will also affect the state of Alice's system after Alice applies $M$. 
So this signaling is bidirectional, even if only one party has access to the
non-unitary map. 
In the context of the black hole information paradox, this implies the acausality in the final state projection model could be present even far away from the black hole. 
Also, assuming one can apply the same $M$ multiple
times, one can perform single-query Grover search using $\sim
1/\delta$ applications of $M$ using the methods of \cite{Aaronson_postBQP, Abrams_Lloyd}. More detailed proofs of these results are provided in Appendix A.

We next examine the way in which these results are connected.  First,
assuming one can speed up Grover search, by a generalization of the hybrid argument of \cite{BBBV}, there is a lower bound on the
deviation from unitarity required to achieve the speedup. 
By our
previous results this implies a lower bound on the superluminal
signaling capacity of the map $M$.
More specifically, suppose that one can search an unstructured list of $N$ items using $q$ queries, with possibly non-unitary operations applied between queries. 
Then, the same non-unitary dynamics must be capable of transmitting superluminal signals with channel capacity $C$ using shared entangled states, where
\begin{equation}C = \Omega \left( \left( \frac{\eta}{2 q^2} - \frac{2}{N} \right)^2
\right) \end{equation}
Here $\eta$ is a constant which is roughly $\sim 0.42$. 
In particular, solving
\textsf{NP}-hard problems in polynomial time by unstructured search
would imply superluminal signaling with inverse polynomial channel
capacity.  This can be regarded as evidence against the possibility of
using black hole dynamics to efficiently solve \textsf{NP}-hard
problems of reasonable size.  A proof of this fact is provided in Appendix A.

In the other direction, assuming one can
send a superluminal signal with channel capacity $C$, there is a
lower bound on the deviation from unitarity which was applied. The proof is provided in Appendix A. 
Again by our previous result, this implies one could solve the Grover search problem on a database of size $N$ using a single query and 
\begin{equation}O \left( \frac{\log(N)}{\log(1+C^2)} \right)\end{equation}
applications of the nonlinear map.
Combining these results, this implies that if one can send a superluminal signal
with $n$ applications of $M$, then one can beat Grover's algorithm
with $O(n)$ applications of $M$ as well, and vice versa.  
This shows
that in these models, the resources required to observe an exponential
speedup over Grover search is polynomially related to the resources
needed to send a superluminal signal. Hence an operational version of
the no-signaling principle (such as ``one cannot observe superluminal
signaling in reasonable-sized experiments'') is equivalent to an
operational version of the Grover lower bound (``one cannot observe
violations of the Grover lower bound in reasonable-sized
experiments'').

\subsection{Modification of the Born Rule}

The next continuous modification of quantum mechanics we consider is
modification of the Born rule. Suppose that quantum states evolve
by unitary transformations, but upon measurement one sees outcome $x$
with probability proportional to some function
$f(\alpha_x)$ of the amplitude $\alpha_x$ on $x$. That is, one sees $x$ with probability
\begin{equation}\frac{f(\alpha_x)}{\sum_y f(\alpha_y)}\end{equation}
Note we have added a normalization factor to ensure this induces a valid probability distribution on outcomes.
This is loosely
inspired by Marolf and Polchinski's work \cite{MP15} which suggests
that the ``state-dependence'' resolution of the firewalls paradox
\cite{PR14} gives rise to violations of the Born rule. First, assuming some reasonable conditions on $f$ (namely, that $f$ is differentiable, $f'$ changes signs a finite number of times in $[0,1]$, and the measurement statistics of $f$ do not depend on the normalization of the state),  we must have
$f(\alpha_x)=|\alpha_x|^p$ for some $p$. 
The proof is provided in Appendix B.

Next we study the impact of
such modified Born rules with $p=2+\delta$ for small
$\delta$. Aaronson \cite{Aaronson_postBQP} previously showed that such
models allow for single-query Grover search in polynomial time while incurring a
multiplicative overhead $1/|\delta|$, and also allow for superluminal signaling using shared entangled
states of $\sim1/|\delta|$ qubits.  (His result further generalizes to the harder problem of \emph{counting} the number of solutions to an \textsf{NP}-hard problem, which is a \#\textsf{P}-hard problem). We find that these relationships hold in the opposite directions as well. Specifically, we show if one can send a superluminal signal with an entangled state on $m$
qubits with probability $\epsilon$, then we must have $\delta =\Omega(\epsilon/m)$. 
By the results of Aaronson \cite{Aaronson_postBQP} this implies one can search a list of $N$ items using $O(\frac{m}{\epsilon} \log N)$ time. 
Hence having the ability to send a superluminal signal using $m$ qubits implies the ability to perform an exponential speedup of Grover's algorithm with multiplicative overhead $m$.

In the other direction, if one
can achieve even a constant-factor speedup over Grover's algorithm using a
system of $m$ qubits, we show $|\delta|$ is at least $1/m$ as
well. 
More precisely, by a generalization of the hybrid argument of \cite{BBBV}, if there is an algorithm to search
  an unordered list of $N$ items with $Q$ queries using $m$ qubits, then 
\begin{equation}
\frac{1}{6} \leq \frac{2 Q}{\sqrt{N}} + |\delta| \log (M) +O(\delta^2).
\end{equation}
So if $Q<\sqrt{N}/24$, then we must have $|\delta| \geq \frac{1}{12m}$. 
The proofs of these facts are provided in Appendix B.

Combining these results shows that the number of qubits required
to observe superluminal signaling or even a modest speedup over
Grover's algorithm are polynomially related. Hence one can derive an
operational version of the no-signaling principle from the Grover
lower bound and vice versa. This quantitative result is in some sense
stronger than the result we achieve for the final-state projection model,
because here we require only a mild speedup over Grover search to
derive superluminal signaling.

\subsection{Cloning, Postselection, and Generic Nonlinearities}

We next consider two ``discrete'' modifications of quantum mechanics in
which standard quantum mechanics is supplemented by one additional
operation. We show that both modifications admit both superluminal signaling with O(1) qubits and exponential speedups over Grover search.

First, we consider a model in which one can clone single
qubits. 
This model can be easily seen to admit superluminal signaling using entangled states, as pointed out by Aaronson, Bouland, Fitzsimons and Lee \cite{ABFL}. Indeed, suppose two parties Alice and Bob share the state $\frac{1}{\sqrt{2}}(\ket{00}+\ket{11})$. If Alice measures her half of the state, and Bob clones his state $k$ times and measures each copy in the computational basis, then Bob will either see either $0^k$ or $1^k$ as his output. On the other hand, if Alice does not measure her half of the state, and Bob does the same experiment, his outcomes will be a random string in $\{0,1\}^k$. Bob can distinguish these two cases with an error probability which scales inverse exponentially with $k$, and thus receive a signal faster than light. 
In addition to admitting superluminal signaling with entangled states, this model also allows the solution of
\textsf{NP}-hard problems (and even \#\textsf{P}-hard problems) using a
single query to the oracle. 
This follows by considering the following gadget: given a state $\rho$ on a single qubit, suppose one makes two copies of $\rho$, performs a Controlled-NOT gate between the copies, and discards one of the copies. This is summarized in a circuit diagram in Fig. \ref{fig:gadget}.
\begin{figure}[h]
\includegraphics[width=2.2in]{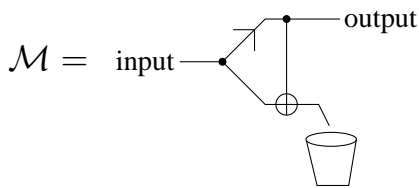}.
\caption{Gadget used to show that cloning allows the poly-time solution of \textsf{NP}-hard problems.} \label{fig:gadget}
\end{figure}

 This performs a non-linear operation $\mathcal{M}$ on the space of density matrices, and following the techniques of Abrams and Lloyd \cite{Abrams_Lloyd}, one can use this operation to ``pry apart" quantum states which are exponentially close using polynomially many applications of the gadget. The proof is provided in Appendix C. This answers an open problem of
\cite{ABFL} about the power of
quantum computers that can clone. Therefore, adding cloning to quantum mechanics allows for both the poly-time solution of \textsf{NP}-hard problems by brute force search, and the ability to efficiently send superluminal signals.

Second, inspired by the final state projection model
\cite{Horowitz_Maldecena}, we consider a model in which one can
postselect on a generic state $\ket{\psi}$ of $n$ qubits. Although
Aaronson \cite{Aaronson_postBQP} previously showed that allowing for
postselection on a single qubit suffices to solve \textsf{NP}-hard and \#\textsf{P}-hard
problems using a single oracle query, this does not immediately imply
that postselecting on a larger state has the same property, because
performing the unitary which rotates $\ket{0}^n$ to $\ket{\psi}$ will
in general require exponentially many gates. Despite this limitation, this model indeed allows the polynomial-time solution of \textsf{NP}-hard problems (as well as \textsf{\#P}-hard problems) and superluminal signaling. 
To see this, first note that given a gadget to postselect on $\ket{\psi}$, one can obtain multiple copies of $\ket{\psi}$ by inputting the maximally entangled state $\sum_i \ket{i}\ket{i}$ into the circuit and postselecting one register on the state $\ket{\psi}$. So consider creating two copies of $\ket{\psi}$, and applying the gadget shown in Figure \ref{postselection_gadget2}, where the bottom
register is postselected onto $\ket{\psi}$, an operation we denote by
$\Qcircuit @C=1em @R=1em { & \gate{\ket{\psi}}  & \qw }$.
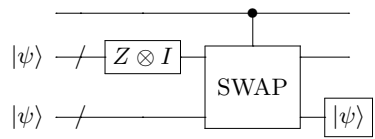
\begin{figure}[b]
\[
\Qcircuit @C=1em @R=1em {
                     & \qw   &\qw   & \ctrl{1}                     & \qw \\
\lstick{\ket{\psi}} &  {/} \qw & \gate{Z\otimes I} &\multigate{1}{\mathrm{SWAP}} & \qw \\
\lstick{\ket{\psi}}  & {/} \qw & \qw & \ghost{\mathrm{SWAP}}        & \gate{\ket{\psi}}
}
\]
\caption{Gadget showing postselection onto generic $\ket{\psi}$ is equivalent to postselection onto $\ket{0}$.}
\label{postselection_gadget2}
\end{figure}
For Haar-random $\ket{\psi}$, one can show the quantity $\bra{\psi}Z\otimes I\ket{\psi}$ is exponentially small, so this gadget simulates postselection on $\ket{0}$ on the first qubit. 
 The complete proof is provided in Appendix D.
Therefore, allowing postselection onto generic states is at least as powerful as allowing postselection onto the state $\ket{0}$, so by Aaronson's results \cite{Aaronson_postBQP} this model admits both superluminal signaling and exponential speedups over Grover search.

In addition, we address an open question from \cite{Abrams_Lloyd}
regarding the computational implications of general nonlinear maps on
pure states. In \cite{Abrams_Lloyd}, Abrams and Lloyd argued that
generic nonlinear maps allow for the solution of \textsf{NP}-hard problems and \textsf{\#P}-hard
problems in polynomial time, except possibly for pathological
examples. In Appendix E, we  prove this result rigorously in the case the map is
differentiable. Thus any pathological examples, if they exist, must
fail to be differentiable. (Here we assume the nonlinearity maps pure states to pure states; as a result it does not subsume our results on quantum computers which can clone, as the cloning operation may map pure states to mixed states. A detailed discussion is provided in Appendix C.) Unfortunately, the action of
general nonlinear maps on subsystems of entangled states are not
well-defined, essentially because they interact poorly with the
linearity of the tensor product. We discuss this in detail in Appendix F. Hence we are unable to connect this
result to signaling in the general case.

\section{Discussion}
\label{discussion}

The central question in complexity theory is which computational problems can be solved efficiently and which cannot. Through experience, computer scientists have found that the most fruitful way to formalize the notion of efficiency is by demanding that the resources, such as time and memory, used to solve a problem must scale at most polynomially with the size of the problem instance (i.e. the size of the input in bits). A widely held conjecture, called the quantum Church-Turing thesis, states that the set of computational problems solvable in-principle with polynomial resources in our universe is equal to BQP, defined mathematically as the set of decision problems answerable using quantum circuits of polynomially many gates \cite{Kaye}. So far, this conjecture has held up remarkably well. Physical processes which conceivably might be more computationally powerful than quantum Turing machines, such as various quantum many-body dynamics of fermions, bosons, and anyons, as well as scattering processes in relativistic quantum field theories, can all be simulated with polynomial overhead by quantum circuits \cite{Aaronson_BosonSampling,FreedmanKitaevWang,Jordan1130,JordanLeePreskill2,JordanLeePreskill3}.

The strongest challenge to the quantum Church-Turing thesis comes from quantum gravity. Indeed, many of the recent quantum gravity models proposed in relation to the black hole firewalls paradox involve nonlinear behavior of wavefunctions \cite{Horowitz_Maldecena,PR14} and thus appear to suggest computational power beyond that of polynomial-size quantum circuits. In particular, the prior work of Abrams and Lloyd suggest that such nonlinearities generically enable polynomial-time solution to NP-hard problems, a dramatic possibility, that standard quantum circuits are not generally expected to admit \cite{Abrams_Lloyd, AaronsonIsland}. Here, we have investigated several models and found a remarkably consistent pattern; in each case, if the modification to quantum mechanics is in a parameter regime allowing polynomial-time solution to NP-hard problems through brute-force search, then it also allows the transmission of superluminal signals through entangled states. Such signaling allows causality to be broken at locations arbitrarily far removed from the vicinity of the black hole, thereby raising serious questions as to the consistency of the models. Thus, the quantum Church-Turing thesis appears to be remarkably robust, depending not in a sensitive way on the complete Hilbert-space formalism of quantum mechanics, but rather derivable from more foundational operational principles such as the impossibility of superluminal signaling. Some more concrete conjectures on these lines are discussed in Appendix G.
\section*{Acknowledgments}We thank Patrick Hayden, Daniel Harlow,
David Meyer, Andrew Childs and Debbie Leung for useful discussions.  Portions of
this paper are a contribution of NIST, an agency of the US government,
and are not subject to US copyright.  This material is based upon work supported by the DuBridge Postdoctoral Fellowship, by the Institute for Quantum Information and Matter, an NSF Physics Frontiers Center (NFS Grant PHY-1125565) with support of the Gordon and Betty Moore Foundation (GBMF-12500028), by the U.S. Department of Energy, Office of Science, Office of High Energy Physics, under Award Number DE-SC0011632, by the NSF Graduate Research Fellowship under grant
no. 1122374, and by the NSF Alan T. Waterman award under grant no. 1249349.  
N. B. and A. B. would also
like to thank QuICS for their hospitality during the completion of
this project.

\onecolumngrid
\appendix

\section*{Appendix A: Final-State Projection}
\stepcounter{section}
\label{fsp}

Recent developments, particularly the AMPS firewall argument
\cite{AMPS}, have generated renewed interest in models of black hole
physics in which quantum mechanics is modified. Here, we explore some
difficulties associated one such scheme, namely the
Horowitz-Maldecena final state projection model
\cite{Horowitz_Maldecena}. In this model, black hole singularities are
thought of as boundaries to spacetime with associated boundary
conditions on the quantum wavefunction \cite{Horowitz_Maldecena}. That
is, at the singularity, the wavefunction becomes projected onto a
specific quantum state. (This can be thought of as a projective
measurement with postselection.)

If one prepares infalling matter in a chosen initial quantum state
$\ket{\psi} \in V$, allows it to collapse into a black hole, and then
collects all of the the Hawking radiation during the black hole
evaporation, one is left with a new quantum state related to the
original by some map $S:V \to V$. (We assume that black holes do not
alter the dimension of the Hilbert space. Standard quantum mechanics
and the Horowitz-Maldecena proposal share this feature.) Within
standard quantum mechanics, all such $S$ correspond to multiplication
by a unitary matrix, and hence the term $S$-matrix is used. If one
instead drops matter into an existing black hole and collects part of
the outgoing Hawking radiation, one is considering an open quantum
system. We leave the analysis of this more general scenario to future
work.

It is possible for the Horowitz-Maldecena final state projection model
to induce a perfectly unitary process $S$ for the black hole. However,
as pointed out in \cite{Gottesman_Preskill}, interactions between the
collapsing body and infalling Hawking radiation inside the event
horizon generically induce deviations from unitarity. In this case,
the action $S$ of the black hole is obtained by applying some linear
but not unitary map $M$, and then readjusting the norm of the quantum
state back to one\footnote{Some interpret the final state projection
model as inducing a unitary map for observers who stay outside the
event horizon, while inducing a non-unitary map for infalling
observers \cite{danharlow}. Under this interpretation, our arguments
still apply to an infalling observer in the context of a large black
hole.}. Correspondingly, if a subsystem of an entangled state is
collapsed into a black hole and the Hawking radiation is collected
then the corresponding transformation is $M \otimes \id$ followed by
an adjustment of the normalization back to 1. Thus, aside from its
interest as a potential model for black holes, the Horowitz-Maldecena
model provides an interesting example of nonlinear quantum mechanics
in which subsystem structure remains well-defined (i.e. the issues
described in Appendix F do not arise).

In sections \ref{AB} and \ref{BA} we show that if Alice has access to
such a black hole and has foresightfully shared entangled states with
Bob, then Alice can send instantaneous noisy signals to Bob and
vice-versa independent of their spatial separation. We quantify the
classical information-carrying capacity of the communication channels
between Alice and Bob and find that they vanish only quadratically
with the deviation from unitarity of the black hole dynamics, as
measured by the deviation of the condition number of $M$ from
one. Hence, unless the deviation from unitarity is negligibly small,
detectable causality violations can infect the entirety of
spacetime. Furthermore, the bidirectional nature of the communication
makes it possible in principle for Alice to send signals into her own
past lightcone, thereby generating grandfather paradoxes.

In section \ref{grover} we consider the use of the black hole
dynamical map $S$ to speed up Grover's search algorithm
\cite{Grover}. We find a lower bound on the condition number of $M$ as
a function of the beyond-Grover speedup. By our results of sections
\ref{AB} and \ref{BA} this in turn implies a lower bound on the
superluminal signaling capacity induced by the black hole. In section
\ref{HMsignalingsearch} we prove the other direction: assuming one can
signal superluminally we derive a lower bound on the condition number
of $M$, which in turn implies a super-Grover speedup\footnote{By a
``super-Grover speedup'', we mean an algorithm which searches an
unstructured $N$-element list using fewer queries than Grover's
algorithm.}. We find that the
black-box solution of \textsf{NP}-hard problems in polynomial time
implies superluminal signaling with inverse polynomial capacity and
vice versa.

\subsection{Communication from Alice to Bob}
\label{AB}

\begin{theorem} Suppose Alice has access to a black hole described by
  the Horowitz-Maldecena final state projection model. Let $M$ be the
  linear but not necessarily unitary map describing the dynamics of
  the black hole. The non-unitarity of $M$ is quantified by $\delta =
  1-\kappa$, the deviation of its condition number from one. Alice can
  transmit instantaneous signals to Bob by choosing to drop her half
  of a shared entangled state into the black hole or not. The capacity
  of the resulting classical communication channel from Alice to Bob
  is at least
\[
  C \geq \frac{3}{8 \ln 2} \delta^2.
\]
\end{theorem}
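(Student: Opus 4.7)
My plan is to exhibit an explicit one-shot signaling protocol, compute Bob's marginal outcome distribution in the two cases Alice chooses between, and then lower-bound the classical channel capacity by the mutual information at a chosen input distribution. Let $M = U\Sigma V^\dagger$ be a singular-value decomposition, and let $\ket{\phi_0},\ket{\phi_1}$ be right-singular vectors corresponding to the smallest and largest singular values $\sigma_0 \le \sigma_1$, so that the condition-number parameter $\delta$ from the theorem statement controls $\sigma_0/\sigma_1$. Alice and Bob preshare
\[
\ket{\Phi}_{AB} \;=\; \tfrac{1}{\sqrt{2}}\bigl(\ket{\phi_0}_A\ket{0}_B + \ket{\phi_1}_A\ket{1}_B\bigr).
\]
Alice encodes a bit $X \in \{0,1\}$ by applying $M$ to her register (and renormalizing the global state) iff $X=1$; Bob decodes by measuring his qubit in the computational basis.

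First I would compute Bob's conditional distributions. Orthonormality of $\ket{\phi_0},\ket{\phi_1}$ makes $\rho_B^{(X=0)} = \id/2$, so Bob sees a fair coin. For $X=1$, the state $(M\otimes\id)\ket{\Phi}$ has squared norm $\tfrac{1}{2}(\sigma_0^2+\sigma_1^2)$, and tracing out Alice after renormalizing gives $\rho_B^{(X=1)} = \operatorname{diag}(\sigma_0^2,\sigma_1^2)/(\sigma_0^2+\sigma_1^2)$, a Bernoulli coin with bias $p = \sigma_1^2/(\sigma_0^2+\sigma_1^2)$. Substituting $\sigma_1/\sigma_0 = 1 + \Theta(\delta)$ and Taylor expanding yields $p - \tfrac{1}{2} = \Theta(\delta)$ to leading order, so Bob's two distributions are separated in total variation by $\Theta(\delta)$.

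The remaining step converts this separation into a capacity bound. The capacity is at least $I(X;Y)$ for any input distribution on $X$; picking $X$ uniform and writing $h$ for the binary entropy in bits, one has
\[
I(X;Y) \;=\; h\!\bigl(\tfrac{1+p}{2}\bigr) \,-\, \tfrac{1}{2} \,-\, \tfrac{1}{2}\, h(p).
\]
Using $h''(\tfrac{1}{2}) = -4/\ln 2$, the expansion $h(\tfrac{1}{2}+\epsilon) = 1 - 2\epsilon^2/\ln 2 + O(\epsilon^4)$ simplifies this to $I(X;Y) = (p-\tfrac12)^2/(2\ln 2) + O\bigl((p-\tfrac12)^4\bigr)$, and combining with the expansion of $p$ in $\delta$ produces an inequality of the form $C \ge c\,\delta^2/\ln 2$.

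The main obstacle is numerical rather than conceptual: pinning down the constant $3/8$ requires controlling the $O(\delta^3),O(\delta^4)$ remainder terms tightly enough that the bound holds as a genuine (not merely asymptotic) inequality for all $\delta$ in range, and possibly choosing the amplitudes in $\ket{\Phi}$ or the distribution of $X$ away from the balanced choices to push the quadratic coefficient up to $3/(8\ln 2)$. After the preceding steps this reduces to a one-variable calculus estimate: writing $I(X;Y)$ as an explicit smooth function of $\delta$ and verifying positivity of $I(X;Y) - 3\delta^2/(8\ln 2)$ on the interval of interest.
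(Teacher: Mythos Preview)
Your approach is essentially the same as the paper's: the shared state, the protocol (Alice applies $M$ or not), and the computation of Bob's two marginals all match exactly. The only divergence is in the final step, converting Bob's two Bernoulli distributions into a capacity bound.

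The paper does not fix the input distribution on $X$. Instead it plugs the error probabilities $\epsilon_0 = 1/2$ and $\epsilon_1 = \lambda_0^2/(\lambda_0^2+\lambda_1^2)$ into a standard closed-form expression for the Shannon capacity of a binary asymmetric channel (which already incorporates the optimization over the input prior), and then Taylor-expands that expression in $\Delta = \tfrac12 - \epsilon_1$ to obtain the quadratic coefficient $3/(2\ln 2)$ in $\Delta$, hence $3/(8\ln 2)$ in $\delta$ after substituting $\Delta = \tfrac12\delta + O(\delta^2)$. Your route via $I(X;Y)$ at uniform $X$ yields the smaller coefficient $1/(2\ln 2)$ in $(p-\tfrac12)^2$, i.e.\ $1/(8\ln 2)$ in $\delta^2$; your diagnosis that the input distribution must be optimized to recover the stated constant is exactly the missing ingredient, and the paper handles it by invoking the exact capacity formula rather than redoing the optimization by hand.

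Two small remarks. First, with $X$ uniform one has $P(Y{=}1) = \tfrac12\cdot\tfrac12 + \tfrac12\, p = \tfrac14 + \tfrac{p}{2}$, so $H(Y) = h\!\left(\tfrac14+\tfrac{p}{2}\right)$, not $h\!\left(\tfrac{1+p}{2}\right)$; your subsequent expansion $I(X;Y) = (p-\tfrac12)^2/(2\ln 2)+O(\cdot)$ is nonetheless correct. Second, regarding your concern about obtaining a genuine (non-asymptotic) inequality: the paper's own proof also proceeds by Taylor expansion and establishes the bound only to leading order in $\delta$, so no additional uniform estimate is required to match the paper's argument.
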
 

\begin{proof}
We prove the lower bound on the channel capacity $C$ by exhibiting an
explicit protocol realizing it. Suppose the black hole acts on a
$d$-dimensional Hilbert space and correspondingly $M$ is a $d \times
d$ matrix. Then, $M$ has a singular-value decomposition given by
\begin{equation}
M = \sum_{i=0}^{d-1} \ket{\psi_i} \lambda_i \bra{\phi_i}
\end{equation}
with
\begin{equation}
\label{ortho}
\braket{\psi_i}{\psi_j} = \braket{\phi_i}{\phi_j} = \delta_{ij}.
\end{equation}
and $\lambda_0,\ldots,\lambda_{d-1}$ all real and nonnegative. We can
choose our indexing so that $\lambda_0$ is the smallest singular value 
and $\lambda_1$ is largest singular value. Now, suppose Alice and Bob
share the state
\begin{equation}
\frac{1}{\sqrt{2}} \left( \ket{\phi_0} \ket{0} + \ket{\phi_1} \ket{1} \right).
\end{equation}
Here $\ket{0}$ and $\ket{1}$ refer to Bob's half of the entangled
state, which can be taken to be a qubit. If Alice wishes to transmit
the message ``0'' to Bob she does nothing. If she wishes to transmit
the message ``1'' to Bob she applies the black hole dynamical map $S$
to her half of the state. In other words, Alice drops her half of the state into the black hole, and waits for the black hole to evaporate. Correspondingly, one applies $M \otimes \id$
to the above state, yielding the unnormalized state
\begin{equation}
\frac{\lambda_0}{\sqrt{2}} \ket{\psi_0} \ket{0} +
\frac{\lambda_1}{\sqrt{2}} \ket{\psi_1} \ket{1}.
\end{equation}
After normalization, this becomes:
\begin{equation}
\frac{\lambda_0}{\sqrt{\lambda_0^2 + \lambda_1^2}} \ket{\psi_0} \ket{0}
+ \frac{\lambda_1}{\sqrt{\lambda_0^2 + \lambda_1^2}} \ket{\psi_1} \ket{1}.
\end{equation}
Thus, recalling \eq{ortho}, Bob's reduced density matrix in this case
is
\begin{equation}
\rho_1 = \frac{\lambda_0^2}{\lambda_0^2 + \lambda_1^2} \ket{0} \bra{0} +
\frac{\lambda_1^2}{\lambda_0^2+\lambda_1^2} \ket{1} \bra{1},
\end{equation}
whereas in the case that Alice's message was ``0'' his reduced density
matrix is
\begin{equation}
\rho_0 = \frac{1}{2} \ket{0} \bra{0} + \frac{1}{2} \ket{1} \bra{1}.
\end{equation}
If $M$ is non-unitary then $\lambda_1 \neq \lambda_0$ and thus the
trace distance between these density matrices is
nonzero. Consequently, $\rho_1$ is distinguishable from $\rho_0$
and some fraction of a bit of classical information has been
transmitted to Bob. 

More quantitatively, one sees that Bob's optimal measurement is in the
computational basis, in which case Alice and Bob are communicating
over a classical binary asymmetric channel. Specifically, if Alice
transmits a 0, the probability of bit-flip error is $\epsilon_0 = 1/2$
whereas if Alice transmits a 1, the probability of bit-flip error is 
\begin{equation}
\label{ep1lam}
\epsilon_1 = \frac{\lambda_0^2}{\lambda_0^2 + \lambda_1^2}.
\end{equation}
A standard calculation
(see \emph{e.g} \cite{Leslie12}) shows that the classical capacity of
this channel is
\begin{equation}
C = h \left( \frac{1}{1+z} \right) - \frac{\log_2(z)}{1+z} + \epsilon_0
\log_2(z) - h(\epsilon_0),
\end{equation}
where
\begin{equation}
z = 2^{\frac{h(\epsilon_1)-h(\epsilon_0)}{1-\epsilon_1-\epsilon_0}}
\end{equation}
and $h$ is the binary entropy
\begin{equation}
h(p) = - p \log_2(p) -(1-p) \log_2(1-p). 
\end{equation}
Specializing to $\epsilon_0 = \frac{1}{2}$ simplifies the expression to
\begin{equation}
C = h \left( \frac{1}{1+y} \right) - \frac{\log_2(y)}{1+y} +
\frac{1}{2} \log_2 (y) - 1
\end{equation}
where
\begin{equation}
y = 2^{\frac{h(\epsilon_1) - 1}{1/2-\epsilon_1}}.
\end{equation}
Lastly, we consider the limiting case $\epsilon_1 = \frac{1}{2} -
\Delta$ for $\Delta \ll 1$. In this limit, we get by Taylor
expansion that 
\begin{equation}
\label{epscap}
C = \frac{3}{2 \ln 2} \Delta^2 + O(\delta^3).
\end{equation}
By \eq{ep1lam}, $\Delta = \frac{1}{2} (1-\kappa) + O((1-\kappa)^2)$,
which completes the proof.
\end{proof}

\subsection{Communication from Bob to Alice}
\label{BA}
\begin{theorem} Suppose Alice has access to a black hole described by
  the Horowitz-Maldecena final state projection model. Let $M$ be the
  linear but not necessarily unitary map describing the dynamics of
  the black hole. The non-unitarity of $M$ is quantified by $\delta =
  1-\kappa$, the deviation of its condition number from one. Bob can
  transmit instantaneous signals to Alice by choosing to measure his
  half of a shared entangled state or not. The capacity
  of the resulting classical communication channel from Bob to
  Alice is at least 
\[
  C \geq \frac{3}{8 \ln 2} \delta^2.
\]
\end{theorem}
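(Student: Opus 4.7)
The plan is to mirror the protocol of the previous theorem, but now with the nonlinearity on Alice's side and the measurement choice on Bob's side. Alice and Bob again share the state
\[
\tfrac{1}{\sqrt{2}}\left(\ket{\phi_0}\ket{0}+\ket{\phi_1}\ket{1}\right),
\]
where $\ket{\phi_0},\ket{\phi_1}$ are the minimum/maximum right singular vectors of $M$. The key structural point to exploit is that in standard quantum mechanics Bob's choice of whether or not to measure his half in the computational basis has no effect on Alice's reduced density matrix, but in the presence of a linear-but-nonunitary $M$ followed by renormalization, the two orderings ``measure-then-apply-$M$'' and ``apply-$M$-then-trace-out'' no longer commute.

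Concretely, the protocol I would analyze is: Alice unconditionally applies the black hole map $S$ to her half of the entangled state (corresponding to $M\otimes\id$ followed by renormalization), while Bob sends a ``1'' by measuring his qubit in the computational basis before Alice acts, and a ``0'' by doing nothing. First I would compute Alice's reduced density matrix in each case. If Bob does nothing, Alice's state after her action and renormalization is
\[
\rho_A^{(0)} \;=\; \frac{\lambda_0^2}{\lambda_0^2+\lambda_1^2}\ket{\psi_0}\bra{\psi_0} + \frac{\lambda_1^2}{\lambda_0^2+\lambda_1^2}\ket{\psi_1}\bra{\psi_1}.
\]
If Bob measures first, his outcome is $0$ or $1$ with probability $1/2$ each, Alice's pre-$M$ state collapses to $\ket{\phi_0}$ or $\ket{\phi_1}$, and applying $M$ and renormalizing sends these to $\ket{\psi_0}$ or $\ket{\psi_1}$ respectively. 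Hence
\[
\rho_A^{(1)} \;=\; \tfrac{1}{2}\ket{\psi_0}\bra{\psi_0} + \tfrac{1}{2}\ket{\psi_1}\bra{\psi_1}.
\]

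The next step is to notice that these are exactly the same pair of diagonal density matrices (in the $\{\ket{\psi_0},\ket{\psi_1}\}$ basis) that appeared as Bob's reduced states in the previous theorem, merely with the labels ``0'' and ``1'' for the message swapped. Therefore Alice's optimal measurement is projection in the $\{\ket{\psi_0},\ket{\psi_1}\}$ basis, giving a binary asymmetric classical channel whose bit-flip probabilities $\epsilon_0, \epsilon_1$ are identical (up to relabeling which input is the ``symmetric'' one) to those in the proof of Theorem~1. I would then quote the same channel-capacity formula and the same Taylor expansion around $\epsilon_1=\tfrac{1}{2}-\Delta$, together with the identification $\Delta=\tfrac{1}{2}(1-\kappa)+O((1-\kappa)^2)$, to obtain $C\geq \tfrac{3}{8\ln 2}\delta^2$.

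There is no serious technical obstacle here: the main thing to get right is the conceptual point that Alice must be the one who applies $M$ unconditionally, so that Bob's act of measurement propagates through the nonlinear renormalization step and thereby changes Alice's marginal. Once this ordering is fixed the entire capacity calculation of the preceding theorem can be reused verbatim, so the only ``new'' content of the proof is verifying that the two resulting density matrices on Alice's side coincide with the two density matrices on Bob's side in Theorem~1 (with messages relabeled), after which the bound on $C$ follows immediately.
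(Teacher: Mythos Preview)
Your proposal is correct and follows essentially the same argument as the paper's proof: the same shared state, the same encoding (Bob measures to send ``1'', does nothing to send ``0''), Alice unconditionally applies $S$ and measures in the $\{\ket{\psi_0},\ket{\psi_1}\}$ basis, and the two resulting states on Alice's side are identified with those of Theorem~1 (with the message labels swapped), after which the capacity calculation is recycled verbatim. The only cosmetic difference is that the paper computes $p(0|1),p(1|1)$ by treating Bob's two measurement outcomes separately rather than writing $\rho_A^{(1)}$ as an averaged density matrix, explicitly flagging that the ensemble/density-matrix equivalence breaks under nonlinear maps; your averaging is valid here precisely because the nonlinearity has already been applied to each branch before averaging.
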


\begin{proof}
Suppose again that Alice and Bob share the state $\frac{1}{\sqrt{2}}
\left( \ket{\phi_0} \ket{0} + \ket{\phi_1} \ket{1} \right)$. If Bob
wishes to transmit the message ``0'' he does nothing, whereas if he
wishes to transmit the message ``1'' he measures his half of the
entangled state in the $\{\ket{0},\ket{1}\}$ basis. Then, Alice
applies the black hole dynamical map $S$ to her half of the state\footnote{That is, Alice drops her half of the shared state into the black hole, and waits for the black hole to evaporate.}, and
then performs a projective measurement in the basis
$\{\ket{\psi_1},\ldots,\ket{\psi_d}\}$. We now show that this
procedure transmits a nonzero amount of classical information from Bob
to Alice unless $\lambda_0 = \lambda_1$, in which case $M$ is unitary.

In the case that Bob does nothing, the post-black hole state is again
\begin{equation}
\frac{\lambda_0}{\sqrt{\lambda_0^2 + \lambda_1^2}} \ket{\psi_0} \ket{0}
+ \frac{\lambda_1}{\sqrt{\lambda_0^2 + \lambda_1^2}} \ket{\psi_1} \ket{1}.
\end{equation}
Thus, Alice's post-black-hole reduced density matrix is
\begin{equation}
\frac{\lambda_0^2}{\lambda_0^2 + \lambda_1^2} \ket{\psi_0}
\bra{\psi_0} + \frac{\lambda_1^2}{\lambda_0^2 + \lambda_1^2} \ket{\psi_1}
\bra{\psi_1}.
\end{equation}
Alice's measurement will consequently yield the following probability
distribution, given that Bob's message was ``0'':
\begin{eqnarray}
p(0|0) & = & \frac{\lambda_0^2}{\lambda_0^2 + \lambda_1^2} \\
p(1|0) & = & \frac{\lambda_1^2}{\lambda_0^2 + \lambda_1^2}.
\end{eqnarray}
Now, suppose Bob's message is ``1''. Then, his measurement outcome
will be either $\ket{0}$ or $\ket{1}$ with equal probability. We must
analyze these cases separately, since the connection between ensembles
of quantum states and density matrices is not preserved under
nonlinear transformations\footnote{Elsewhere we have used density
  matrices, but only after the application of the nonlinear
  operation.}. If he gets outcome zero, then Alice holds the pure
state $\ket{\phi_0}$, which gets transformed to $\ket{\psi_0}$ by the
action of the black hole. If Bob gets outcome one, then Alice holds
$\ket{\phi_1}$, which gets transformed to $\ket{\psi_1}$ by the action
of the black hole. Hence, Alice's measurement samples from the
following distribution given that Bob's message was ``1'':
\begin{eqnarray}
p(0|1) & = & 1/2 \\
p(1|1) & = & 1/2.
\end{eqnarray}
Hence, the information transmission capacity from Bob to Alice using
this protocol is the same as the Alice-to-Bob capacity calculated in
section \ref{AB}.
\end{proof}

\subsection{Super-Grover Speedup implies Superluminal Signaling}
\label{grover}

\begin{theorem} Suppose one has access to one or more black holes
  described by the Horowitz-Maldecena final state projection model. If
  the non-unitary dynamics induced by the black hole(s) allow the
  solution of a Grover search problem on a database of size $N$ using $q$
  queries then the same non-unitary dynamics could be used transmit
  instantaneous signals by applying them to half of an entangled
  state. The capacity of the resulting classical communication channel
  (bits communicated per use of the nonlinear dynamics) is at least
\[
C = \Omega \left( \left( \frac{\eta}{2 q^2} - \frac{2}{N} \right)^2
\right)
\]
in the regime $0 < \frac{\eta}{2 q^2} - \frac{2}{N} \ll 1$, where
$\eta = (\sqrt{2} - \sqrt{2 - \sqrt{2}})^2 \simeq 0.42$.
\end{theorem}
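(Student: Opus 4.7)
The plan is to generalize the BBBV hybrid argument \cite{BBBV} to the non-unitary, post-renormalization setting, extract a lower bound on the per-application non-unitarity parameter $\Delta$ from the hypothesized $q$-query super-Grover algorithm, and then feed that bound into the signaling-capacity formula established as the Alice-to-Bob theorem in Section \ref{AB}.

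First I would model the hypothesized algorithm as an alternating sequence of $q$ oracle queries and arbitrary algorithmic operations, where the latter may be ordinary unitaries or applications of $M$ followed by renormalization. For each candidate marked item $x \in \{0,\dots,N-1\}$ let $|\psi^x_t\rangle$ be the state after $t$ queries when the oracle marks $x$, and let $|\psi^\varnothing_t\rangle$ be the corresponding state when the oracle is trivial. In the standard BBBV setup one has $\sum_x \| |\psi^x_q\rangle - |\psi^\varnothing_q\rangle \|^2 \leq 4q^2$, which follows from (i) unitarity of all non-oracle operations preserving the $\ell^2$ distance between trajectories between queries, and (ii) a per-query perturbation bound controlled by the amplitude on $x$ inside that query.

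Second I would adapt step (i) to accommodate applications of $M$. A single application of $M$ followed by renormalization can stretch the $\ell^2$ distance between two trajectories only by a factor tied to the singular-value gap of $M$, namely by $1 + O(\Delta)$ per application. Propagating this through the hybrid argument and averaging over $x$ yields, on an average marked item, a bound of the form $\| |\psi^x_q\rangle - |\psi^\varnothing_q\rangle \| \lesssim 2q/\sqrt{N} + (\text{correction growing with } \Delta)$. For the algorithm to solve search on a uniformly random marked item with constant probability, a state-discrimination argument requires the final ``found''/``not found'' reduced states to be trace-distance separated by at least $\sqrt{\eta}$, where $\eta = (\sqrt{2} - \sqrt{2-\sqrt{2}})^2$ arises as the optimal gap at the boundary of reliable distinguishability of two nearly-identical measurement outcomes. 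Rearranging then produces the per-use lower bound $\Delta \geq \eta/(2q^2) - 2/N$, and substituting into $C \geq \tfrac{3}{2 \ln 2}\Delta^2$ from the signaling analysis of Section \ref{AB} gives the claimed capacity.

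The main obstacle is step two: renormalization after $M$ is a genuinely nonlinear operation on the global state, so the unitarity-based invariance that lets standard BBBV freely propagate inter-trajectory differences fails. My intended workaround is to run the hybrid argument on the \emph{unnormalized} states produced by repeated applications of $M \otimes \id$, and at the end convert back by bounding the ratio of global norms across the two trajectories using the singular values $\lambda_0,\lambda_1$ of $M$. A secondary subtlety is that the algorithm may interleave arbitrarily many $M$-applications between oracle queries, so the singular-value amplification bookkeeping must be kept linear rather than exponential in the total number of applications; since the theorem measures capacity \emph{per use} of the nonlinear dynamics this is tolerable, but the linearization is what the $2/N$ correction in the final bound is ultimately absorbing.
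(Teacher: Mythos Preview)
Your approach has a real gap at precisely the point you flag as the ``main obstacle.'' Tracking a multiplicative per-application stretch factor $1+O(\Delta)$ compounds over the total number $T$ of $M$-applications, giving at best $(1+O(\Delta))^{2T} \cdot 4q^2 \geq \eta N$, hence $\Delta \gtrsim \log(\eta N/q^2)/T$ rather than $\eta/(2q^2)-2/N$. Since $T$ is unbounded in terms of $q$, this does not recover the stated bound. Working with unnormalized states does not fix this: the success criterion and the per-query oracle perturbation are both stated on normalized states, and converting at the end reintroduces the very $x$-dependent norm ratios you are trying to sidestep. Your final sentence misidentifies the origin of the $2/N$ term; it has nothing to do with linearizing $M$-bookkeeping.

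The paper avoids multiplicative tracking altogether by introducing an \emph{additive} quantity. With $D_k = \sum_x \|\,|\psi_k^x\rangle - |\psi_k\rangle\,\|^2$ computed on normalized states, and $C_k$ the analogous sum just after the $k$th oracle call but before the inter-query map $S_k$, one sets $R_k = D_k - C_k$ and $B = \max_k R_k$. The standard oracle estimate $C_k \leq D_{k-1} + 4\sqrt{D_{k-1}} + 4$ then gives the recursion $D_k \leq D_{k-1} + 4\sqrt{D_{k-1}} + 4 + B$, which inducts to $D_k \leq (4+B)k^2$. Combined with the success bound $D_q \geq \eta N$ this yields $B \geq \eta N/q^2 - 4$; the $-4$ here (which becomes $-2/N$ after averaging over $x$ and halving in the passage from squared norms to real parts of inner products) comes from the per-query oracle perturbation, not from any $M$-accounting. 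Averaging then exhibits a specific $k$ and $x$ with $\mathrm{Re}\,\langle A|B\rangle - \mathrm{Re}\,\langle A'|B'\rangle \geq \epsilon := \eta/(2q^2) - 2/N$, where $|A'\rangle,|B'\rangle$ are the normalized images of $|A\rangle,|B\rangle$ under $S_k$; writing $S_k$ as $v \mapsto Mv/\|Mv\|$ and expanding forces $\lambda_1^2/\lambda_0^2 \geq 1+\epsilon$, after which the capacity bound follows from the Alice-to-Bob analysis exactly as you propose.
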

 
\begin{proof}
Let $V$ be the set of normalized vectors in the Hilbert space
$\mathbb{C}^{N}$. We will let $S:V \to V$ denote the nonlinear map
that a black hole produces by applying the matrix $M$ and then
readjusting the norm of the state to one. We will not assume that all
black holes are identical, and therefore, each time we interact with a
black hole we may have a different map. We denote the transformation
induced on the $k\th$ interaction by $S_k:V \to V$. We treat $S_k$ as
acting on the same state space for all $k$, but this is not actually a
restriction because we can simply take this to be the span of all the
Hilbert spaces upon which the different maps act.

Now suppose we wish to use the operations $S_1,S_2,\ldots$ to speed up
Grover search. Let $x \in \{0,\ldots,N-1\}$ denote the solution to the
search problem on $\{0,\ldots,N-1\}$. The corresponding unitary oracle
on $\mathbb{C}^N$ is\footnote{An alternate definition is to use a
  bit-flip oracle $U_x \ket{y}\ket{z} = \ket{y} \ket{z \oplus f(y)}$
    where $f(y) = 1$ if $y=x$ and $f(y)=0$ otherwise. This choice is
    irrelevant since the phase flip oracle can be simulated using a
    bit-flip oracle if the output register is initialized to
    $(\ket{0}-\ket{1})/\sqrt{2}$, and a bit flip oracle can be
    simulated using a controlled-phase-flip oracle.}
\begin{equation}
O_x = \id - 2 \ket{x}\bra{x}.
\end{equation}
The most general algorithm to search for $x$ is of the form
\begin{equation}
\label{finalstate}
S_q  O_x  \ldots S_2 O_x S_1 O_x \ket{\psi_0}
\end{equation}
followed by a measurement. Here $\ket{\psi_0}$ is any $x$-independent
quantum state on $\mathbb{C}^N$, and $S_k$ is any transformation that
can be achieved on $\mathbb{C}^N$ by any sequence of unitary
operations and interactions with black holes. Note that our
formulation is quite general and includes the case that multiple
non-unitary interactions are used after a given oracle
query, as is done in \cite{Abrams_Lloyd}. Also, for some $k$, $S_k$
may be purely unitary. For example,
one may have access to only a single black hole, and the rest of the
iterations of Grover's algorithm must be done in the ordinary unitary
manner. If the final measurement on the state described in
\eq{finalstate} succeeds in identifying $x$ with high probability for
all $x \in \{0,\ldots,N-1\}$ then we say the query complexity of
Grover search using the black hole is at most $q$.

We now adapt the proof of the $\Omega(\sqrt{N})$ quantum query lower
bound for Grover search that was given in\footnote{Our notation is
  based on the exposition of this proof given in
  \cite{Nielsen_Chuang}.} \cite{BBBV} to show that any improvement in
the query complexity for Grover search implies a corresponding lower
bound on the ability of $S_k$ for some $k \in \{1,\ldots,q\}$ to ``pry
apart'' quantum states. This then implies a corresponding lower bound
on the rate of a superluminal classical information transmission
channel implemented using $S_k$.

The sequence of quantum states obtained in the algorithm
\eq{finalstate} is
\begin{eqnarray}
\ket{\psi_0^x} & = & \ket{\psi_0} \nonumber \\
\ket{\phi_1^x} & = & O_x \ket{\psi_0} \nonumber \\
\ket{\psi_1^x} & = & S_1 O_x \ket{\psi_0} \label{stateseq} \\
\ket{\phi_2^x} & = & O_x S_1 O_x \ket{\psi_0} \nonumber \\
\ket{\psi_2^x} & = & S_2 O_x S_1 O_x \ket{\psi_0} \nonumber \\
               & \vdots & \nonumber \\
\ket{\psi_q^x} & = & S_q O_x \ldots S_1 O_x \ket{\psi_0}. \nonumber
\end{eqnarray}
Let
\begin{eqnarray}
\ket{\psi_k} & = & S_k S_{k-1} \ldots S_1 \ket{\psi_0} \\
C_k & = & \sum_{x=0}^{N-1} \| \ket{\phi_k^x} - \ket{\psi_{k-1}} \|^2
\label{ck} \\
D_k & = & \sum_{x=0}^{N-1} \| \ket{\psi_k^x} - \ket{\psi_k} \|^2. \label{dk}
\end{eqnarray}
$\ket{\psi_k}$ can be interpreted as the state which would have been
obtained after the $k\th$ step of the algorithm with no oracle queries
(or of the Grover search problem lacked a solution).

Now, assume that for all $x \in \{0,\ldots,N-1\}$ the search algorithm
succeeds after $q$ queries in finding $x$ with probability at least
$\frac{1}{2}$. Then,
\begin{equation}
| \langle x | \psi_q^x \rangle |^2 \geq \frac{1}{2} \quad \forall x
\in \{0,\ldots,N-1\}
\end{equation}
which implies
\begin{equation}
D_q \geq \eta N, \label{bigomega}
\end{equation}
with $\eta = (\sqrt{2}-\sqrt{2-\sqrt{2}})^2 \simeq 0.42$, as shown in
\cite{BBBV} and discussed in \cite{Nielsen_Chuang}. By \eq{stateseq},
\eq{ck}, and \eq{dk},
\begin{eqnarray}
C_k & = & \sum_{x=0}^{N-1} \| O_x \ket{\psi_{k-1}^x} -
\ket{\psi_{k-1}} \|^2 \\
 & \leq & D_{k-1} + 4 \sqrt{D_{k-1}} + 4, \label{ckbound1}
\end{eqnarray}
where the above inequality is obtained straightforwardly using the
triangle and Cauchy-Schwarz inequalities.

Next, let
\begin{equation}
R_k = D_k - C_k. \label{rk}
\end{equation}
Thus, by \eq{stateseq}, \eq{ck}, and \eq{dk}, 
\begin{equation}
R_k = \sum_{x=0}^{N-1} \| S_k \ket{\phi_k^x} - S_k \ket{\psi_{k-1}} \|^2
 - \sum_{x=0}^{N-1} \| \ket{\phi_k^x} - \ket{\psi_{k-1}} \|^2. \label{rkunpack}
\end{equation}
Hence, one sees that $R_k$ is some measure of the ability of $S_k$ to
``pry apart'' quantum states. (In ordinary quantum mechanics $S_k$ would
be unitary and hence $R_k$ would equal zero.)

Combining \eq{rk} and \eq{ckbound1} yields
\begin{equation}
D_k \leq R_k + D_{k-1} + 4 \sqrt{D_{k-1}} + 4. \label{recurrence1}
\end{equation}
Let
\begin{equation}
B = \max_{1 \leq k \leq q} R_k. \label{Bdef}
\end{equation}
Then \eq{recurrence1} yields the simpler inequality
\begin{equation}
D_k \leq B + D_{k-1} + 4 \sqrt{D_{k-1}} + 4. \label{recurrence2}
\end{equation}
By \eq{stateseq} and \eq{dk},
\begin{equation}
D_0 = 0. \label{init}
\end{equation}
By an inductive argument, one finds that \eq{recurrence2} and \eq{init} imply
\begin{equation}
D_k \leq (4+B) k^2. \label{payoff}
\end{equation}
Combining \eq{payoff} and \eq{bigomega} yields
\begin{equation}
(4+B)q^2 \geq \eta N,
\end{equation}
or in other words
\begin{equation}
B \geq \frac{\eta N}{q^2} -4.
\end{equation}
Thus, by \eq{Bdef} and \eq{rkunpack}, there exists some $k \in
\{1,\ldots,q\}$ such that
\begin{equation}
\sum_{x=0}^{N-1} \left( \| S_k \ket{\phi_k^x} - S_k \ket{\psi_{k-1}}
  \|^2 - \| \ket{\phi_k^x} - \ket{\psi_{k-1}} \|^2 \right) \geq
\frac{\eta N}{q^2} -4
\end{equation}
Hence, there exists some $x \in \{0,\ldots,N-1\}$ such that
\begin{equation}
\| S_k \ket{\phi_k^x} - S_k \ket{\psi_{k-1}} \|^2 - 
\| \ket{\phi_k^x} - \ket{\psi_{k-1}} \|^2 \geq 
\frac{\eta}{q^2} - \frac{4}{N}. \label{normpry}
\end{equation}
To simplify notation, define
\begin{eqnarray}
\ket{A} & = & \ket{\phi_k^x} \\
\ket{B} & = & \ket{\psi_{k-1}} \\
\ket{A'} & = & S_k \ket{\phi_k^x} \\
\ket{B'} & = & S_k \ket{\psi_{k-1}}.
\end{eqnarray}
Then \eq{normpry} becomes
\begin{equation}
\| \ket{A'} - \ket{B'} \|^2 - \| \ket{A} - \ket{B} \|^2 \geq
\frac{\eta}{q^2} - \frac{4}{N}. \label{normpryAB}
\end{equation}
Recalling that $\| \ket{\psi} \| = \sqrt{\langle \psi | \psi
  \rangle}$, \eq{normpryAB} is equivalent to
\begin{equation}
\label{dotpry}
\re \langle A | B \rangle - \re \langle A' | B' \rangle \geq \epsilon
\end{equation}
with
\begin{equation}
\label{epsdef}
\epsilon = \frac{\eta}{2 q^2} - \frac{2}{N}. 
\end{equation}

Next we will show that, within the framework of final-state projection
models, \eq{dotpry} implies that Alice can send a polynomial fraction
of a bit to Bob or vice versa using preshared entanglement and a
single application of black hole dynamics. Recall that, within the
final state projection model,
\begin{eqnarray}
\ket{A'} & = & \frac{M \ket{A}}{\sqrt{\bra{A} M^\dag M \ket{A}}} \\
\ket{B'} & = & \frac{M \ket{B}}{\sqrt{\bra{B} M^\dag M \ket{B}}}
\end{eqnarray}
Thus, \eq{dotpry} is equivalent to
\begin{equation}
\re \left[ \bra{A} \left( \id - \frac{M^\dag M}
{\sqrt{\bra{A} M^\dag M \ket{A} \bra{B} M^\dag M \ket{B}}} \right)
\ket{B} \right] \geq \epsilon
\end{equation}
Hence,
\begin{equation}
\label{Mnorm}
\left\| \id -  \frac{M^\dag M}
{\sqrt{\bra{A} M^\dag M \ket{A} \bra{B} M^\dag M \ket{B}}} \right\| \geq \epsilon.
\end{equation}
Again using $\lambda_0$ to denote the smallest singular value of $M$ 
and $\lambda_1$ to denote the largest, we see that, assuming
$\epsilon$ is nonnegative, \eq{Mnorm} implies either\\
\textbf{Case 1:}
\begin{equation}
\frac{\lambda_1^2}{\sqrt{\bra{A} M^\dag M \ket{A} \bra{B} M^\dag M
    \ket{B}}} \geq 1 + \epsilon,
\end{equation}
which implies
\begin{equation}
\label{direct}
\frac{\lambda_1^2}{\lambda_0^2} \geq 1 + \epsilon,
\end{equation}
or\\\
\textbf{Case 2:}
\begin{equation}
\frac{\lambda_0^2}{\sqrt{\bra{A} M^\dag M \ket{A} \bra{B} M^\dag M
    \ket{B}}} \leq 1 - \epsilon,
\end{equation}
which implies
\begin{equation}
\label{recip}
\frac{\lambda_0^2}{\lambda_1^2} \leq 1 - \epsilon.
\end{equation}
Examining \eq{dotpry}, one sees that $\epsilon$ can be at most 2. If 
$0 \leq \epsilon \leq 1$ then \eq{recip} implies \eq{direct}. If 
$1 < \epsilon \leq 2$ then case 2 is impossible. Hence, for any
nonnegative $\epsilon$ one obtains \eq{direct}. Hence, by the results
of sections \ref{AB} and \ref{BA}, Alice and Bob can communicate in
either direction through a binary asymmetric channel whose bitflip
probabilities $\epsilon_0$ for transmission of zero and $\epsilon_1$
for transmission of one are given by
\begin{eqnarray}
\epsilon_0 & = & \frac{1}{2} \label{epnaught} \\
\epsilon_1 & = & \frac{\lambda_0^2}{\lambda_0^2+\lambda_1^2} \leq
\frac{1}{2+\epsilon}. \label{epone_raw}
\end{eqnarray}
For $0 \leq \epsilon \leq 2$, $\frac{1}{2+\epsilon} \leq \frac{1}{2} -
\frac{\epsilon}{8}$. Thus, \eq{epone_raw} implies the following more
convenient inequality
\begin{equation}
\label{epone}
\epsilon_1 \leq \frac{1}{2} - \frac{\epsilon}{8}.
\end{equation}
In section \ref{AB} we calculated that the channel capacity in the
case that $\epsilon_0 = \frac{1}{2}$ and $\epsilon_1 = \frac{1}{2} -
\delta$ is $\Omega(\delta^2)$ for $\delta \ll 1$. Thus,
\eq{epnaught} and \eq{epone} imply a channel capacity in either
direction of
\begin{equation}
\label{final_capacity}
C = \Omega \left( \left( \frac{\eta}{2 q^2} - \frac{2}{N} \right)^2
\right)
\end{equation}
in the regime $0 < \frac{\eta}{2 q^2} - \frac{2}{N} \ll 1$.
\end{proof}

The above scaling of the superluminal channel capacity with Grover
speedup shows that polynomial speedup for small instances or
exponential speedup for large instances imply $1/\mathrm{poly}$
superluminal channel capacity. In particular, to solve \textsf{NP} in
polynomial time without exploiting problem structure we would need $q
\propto \log^c N$ for some constant $c$. In this setting $N = 2^n$
where $n$ is the size of the witness for the problem in NP. In this
limit, \eq{final_capacity} implies instantaneous signaling channels in
each direction with capacity at least
\begin{equation}
\label{Cval}
C = \Omega \left( \frac{1}{\log^{4c} N} \right) =  \Omega \left(
\frac{1}{n^{4c}} \right).
\end{equation}
If we assume that superluminal signaling capacity is limited to some
negligibly small capacity $C \leq \epsilon$ then, by \eq{Cval}, NP-hard
problems cannot be solved by unstructured search in time scaling
polynomially with witness size (specifically $n^c$ for some constant
$c$) except possibly for unphysically large instances with 
$n = \Omega \left( \left( \frac{1}{\epsilon} \right)^{\frac{1}{4c}} \right)$.

\subsection{Signaling implies Super-Grover Speedup}
\label{HMsignalingsearch}

In sections \ref{AB} and \ref{BA} we showed that if final-state
projection can be used to speed up Grover search it can also be used
for superluminal signaling. In this section we show the
converse. Unlike in section \ref{grover}, we here make the
assumption that we can make multiple uses of the same non-unitary map
$S$ (just as other quantum gates can be used multiple times without
variation). Since signaling cannot be achieved by performing unitary
operations on entangled quantum degrees  of freedom, superluminal
signaling implies non-unitarity.  Furthermore, as shown in Appendix F, iterated application of 
any nonlinear but differentiable map allows the Grover search problem
to be solved with only a single oracle query. The nonlinear maps that
arise in final-state projection models are differentiable (provided
$M$ is invertible), and thus within the final-state projection
framework signaling implies single-query Grover search. In the
remainder of this section we quantitatively investigate how many
iterations of the nonlinear map are needed to achieve single-query
Grover search, as a function of the superluminal signaling capacity. We
find that unless the signaling capacity is exponentially small,
logarithmically many iterations suffice. Specifically, our main result
of this section is the following theorem.

\begin{theorem}
Suppose Alice has access to a linear but not necessarily unitary maps
on quantum states, as can arise in the Horowitz-Maldecena final
state projection model. Suppose she achieves instantaneous
classical communication capacity of $C$ bits transmitted per use of
nonunitary dynamics. Then she could solve the Grover search problem on
a database of size $N$ using a single query and 
$O \left( \frac{\log(N)}{\log(1+C^2)} \right)$ applications of the
available nonunitary maps.
\end{theorem}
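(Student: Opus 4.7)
The plan is to prove the theorem in two pieces: first, invert the channel-capacity bounds of Sections \ref{AB} and \ref{BA} to turn an instantaneous signaling rate $C$ into a quantitative lower bound on the condition number of the nonunitary map $M$; second, use iterated applications of $M$, interleaved with cost-free unitaries, as an amplitude-amplification gate which boosts a single-query Grover state into a state concentrated on the marked element $\ket{x}$.

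For the first step, I would start from the binary-asymmetric-channel bit-flip probabilities $\epsilon_0 = \tfrac{1}{2}$ and $\epsilon_1 = \lambda_0^2/(\lambda_0^2+\lambda_1^2)$ already derived, together with the leading-order capacity $C = \frac{3}{2\ln 2}\Delta^2 + O(\Delta^3)$ with $\Delta = \tfrac{1}{2}-\epsilon_1$. Writing $\lambda_1/\lambda_0 = 1+\alpha$ and expanding gives $\Delta = \alpha/2 + O(\alpha^2)$, so the capacity relation inverts to $\alpha \geq \Omega(\sqrt{C})$, and hence $\log(\lambda_1/\lambda_0) \geq \Omega(\sqrt{C}) \geq \Omega(\log(1+C^2))$ uniformly for $0 \leq C \leq 1$. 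Any protocol actually achieving signaling at rate $C$ must therefore employ a map whose condition number is bounded away from one by this much.

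For the second step, I would exploit that Alice can apply arbitrary unitaries before and after each invocation of $M$ at no signaling cost. Let $\{\ket{\phi_i}\}$ and $\{\ket{\psi_i}\}$ be the right and left singular vectors of $M$, and choose unitaries $U,U'$ with $U\ket{0} = \ket{\phi_1}$, $U\ket{1} = \ket{\phi_0}$, $U'^\dagger\ket{\psi_1} = \ket{0}$, $U'^\dagger\ket{\psi_0} = \ket{1}$. Then the composite map $\widetilde M := U'^\dagger M U$, restricted to the span of $\ket{0},\ket{1}$ and followed by renormalisation, is the diagonal nonunitary gate $\mathrm{diag}(\lambda_1,\lambda_0)$, which multiplies the amplitude ratio $\ket{0}:\ket{1}$ by $\lambda_1/\lambda_0$ per use. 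The Grover portion of the algorithm then follows the Abrams--Lloyd single-query construction: one oracle call applied to $\tfrac{1}{\sqrt{N}}\sum_y \ket{y}\ket{0}$, followed by a CNOT controlled on the oracle output, produces $\tfrac{1}{\sqrt{N}}\bigl(\ket{x}\ket{1} + \sum_{y\neq x}\ket{y}\ket{0}\bigr)$, in which the flag qubit has amplitude $1/\sqrt{N}$ on $\ket{1}$. Iterating $\widetilde M$ on the flag qubit $k$ times boosts this ratio by $(\lambda_1/\lambda_0)^k$; taking $k = O(\log N/\log(\lambda_1/\lambda_0))$ makes the flag qubit close to $\ket{1}$, and measuring the first register then reveals $x$. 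Substituting the capacity bound from the first step yields the advertised count $k = O(\log N/\log(1+C^2))$.

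The main obstacle I anticipate is tracking the effect of renormalisation on superpositions, since after normalising, $\widetilde M$ couples the amplitudes of all branches rather than acting factor-wise. I would handle this by working in the fixed singular basis of $M$, where all squared amplitudes rescale by a single common denominator at each step, so the ratio of any two branch amplitudes evolves multiplicatively by exactly the singular-value ratio, independent of the other branches. A secondary concern is that Alice does not know $x$ in advance, so the unitaries defining $\widetilde M$ must be $x$-independent; the construction above satisfies this, since $U$ and $U'$ depend only on the singular vectors of $M$, and the flag qubit is a fixed distinguished register rather than a function of $x$.
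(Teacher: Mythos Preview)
Your second step is essentially what the paper has in mind (it simply cites the Abrams--Lloyd methods without spelling them out), modulo a sign slip: with your convention $\widetilde M = \mathrm{diag}(\lambda_1,\lambda_0)$ you are amplifying the $\ket{0}$ branch of the flag, not the $\ket{1}$ branch carrying the marked item, so you should swap the roles of $\ket{0}$ and $\ket{1}$ in the definition of $U$.

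The real problem is in your first step. Theorems in Sections~\ref{AB} and~\ref{BA} compute the capacity of one \emph{specific} protocol and thereby give a \emph{lower} bound on the achievable capacity as a function of the condition number. Inverting that relation tells you: if the specific protocol achieves capacity $C$, then the condition number must be at least so large. But the theorem hypothesis is only that Alice achieves capacity $C$ by \emph{some} protocol --- possibly one using a different shared state, a different measurement for Bob, or even two distinct nonunitary maps $M_0, M_1$ for the two messages. To conclude anything about the condition number you need the opposite direction: an \emph{upper} bound on the capacity achievable by \emph{any} protocol in terms of the condition number. Your sentence ``Any protocol actually achieving signaling at rate $C$ must therefore employ a map whose condition number is bounded away from one by this much'' is exactly the step that is not justified.

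The paper fills this gap by writing down the most general protocol (Alice applies $M_0$ or $M_1$; reduce to a single map $M = M_1 M_0^{-1}$), expanding Bob's two output distributions in the singular basis of $M$, and showing via the triangle inequality that their total variation distance satisfies $\Delta \leq \tfrac{1}{2}\max\{1-\lambda_0^2/\lambda_1^2,\ \lambda_1^2/\lambda_0^2 - 1\}$. A Fannes-type bound then gives $C \leq \Delta - \Delta\log_2\Delta = O(\sqrt{\Delta})$, whence $|1-\kappa^2| = \Omega(C^2)$, which is the origin of the $\log(1+C^2)$ in the final count. Note this is genuinely weaker than the $\Omega(\sqrt{C})$ you obtained by inverting the lower bound --- the discrepancy is precisely the slack between the best known protocol and the best known upper bound, and your stronger conclusion is not currently justified.
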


\begin{proof}
Suppose Alice has access to black hole(s) and Bob does not. Alice will
use this to send signals to Bob using some shared entangled state
$\ket{\psi}_{AB}$. Her most general protocol is to apply some
map $M_0$ to her half of the state if she wishes to transmit a zero
and some other map $M_1$ if she wishes to transmit a one. (As a special
case, $M_0$ could be the identity.) Here, per the final state
projection model, $M_0$ and $M_1$ are linear but not necessarily
unitary maps, and normalization of quantum states is to be adjusted
back to one after application of these maps. The possible states
shared by Alice and Bob given Alice's two possible messages are
\begin{eqnarray}
\ket{\psi_0}_{AB} & \propto & M_0 \ket{\psi}_{AB} \\
\ket{\psi_1}_{AB} & \propto & M_1 \ket{\psi}_{AB}.
\end{eqnarray}

The signaling capacity is determined by the distinguishability of the
two corresponding reduced density matrices held by Bob
\begin{eqnarray}
\rho_0 & = & \mathrm{Tr}_A \left[ \ket{\psi_0}_{AB} \right] \\
\rho_1 & = & \mathrm{Tr}_A \left[ \ket{\psi_1}_{AB} \right].
\end{eqnarray}
We can define
\begin{equation}
\ket{\psi'} \propto M_0 \ket{\psi}_{AB}
\end{equation}
in which case
\begin{equation}
\ket{\psi_1}_{AB} \propto M_1 M_0^{-1} \ket{\psi'}.
\end{equation}
(We normalize $\ket{\psi'}$ so that $\braket{\psi'}{\psi'} = 1$.)
Thus, the signaling capacity is determined by the distinguishability
of
\begin{eqnarray}
\rho_0 & = & \mathrm{Tr}_A \left[ \ket{\psi'} \right] \\
\rho_1 & = & \mathrm{Tr}_A \left[ \frac{1}{\eta} M \ket{\psi'} \right]
\end{eqnarray}
where
\begin{eqnarray}
M & = & M_1 M_0^{-1} \\
\eta & = & \sqrt{\bra{\psi'} M^\dag M \ket{\psi'}}.
\end{eqnarray}

We have thus reduced our analysis to the case that Alice applies some
non-unitary map $M$ to her state if she wants to transmit a one and
does nothing if she wants to transmit a zero. We will next obtain a
lower bound $\kappa_{\min}$ on the condition number of $M$ as a function of
the signaling capacity from Alice to Bob. This then implies that one
of $M_0, M_1$ has a condition number at least $\sqrt{\kappa_{\min}}$
for the general case.

Suppose that $M$ has the following singular value decomposition
\begin{equation}
\label{svd}
M = \sum_i \lambda_i \ket{\psi_i} \bra{\phi_i}.
\end{equation}
We can express $\ket{\psi'}$ as
\begin{equation}
\ket{\psi'} = \sum_{i,j} \alpha_{ij} \ket{\phi_i} \ket{B_j}
\end{equation}
where $\ket{\phi_1}, \ket{\phi_2}, \ldots$ is the basis determined by
the singular value decomposition \eq{svd} and $\ket{B_1}, \ket{B_2},
\ldots$ is the basis Bob will perform his measurement in when he tries
to extract Alice's message. If Alice wishes to transmit one then she
applies $M$ yielding
\begin{equation}
\ket{\psi_1} \propto \sum_{i,j} \lambda_i \alpha_{ij} \ket{\psi_i} \ket{B_j}.
\end{equation}
So
\begin{eqnarray}
\rho_0 & = & \sum_{i,j,k} \alpha_{ij} \alpha^*_{ik} \ket{B_j}\bra{B_k}\\
\rho_1 & = & \sum_{i,j,k} \frac{\lambda_i^2}{\eta} \alpha_{ij}
\alpha^*_{ik} \ket{B_j} \bra{B_k}.
\end{eqnarray}
Consequently, Bob's measurement will yield a sample from the following
probability distributions conditioned on Alice's message.
\begin{eqnarray}
p(j|0) & = & \sum_i |\alpha_{ij}|^2 \\
p(j|1) & = & \sum_i \frac{\lambda_i^2}{\eta} |\alpha_{ij}|^2.
\end{eqnarray}
The total variation distance between these distributions, which
determines the capacity of the superluminal channel is
\begin{equation}
\label{deltadef}
\Delta = \frac{1}{2} \sum_j |p(j|0)-p(j|1)| = \frac{1}{2} \sum_j \left| \sum_i
|\alpha_{ij}|^2 \left( 1 - \frac{\lambda_i^2}{\eta} \right) \right|.
\end{equation}

From a given value of this total variation distance we wish to derive
a lower bound on the condition number of $M$, that is, the ratio of
the largest singular value to the smallest. Applying the triangle
inequality to \eq{deltadef} yields
\begin{equation}
\label{ineq1}
\Delta \leq \frac{1}{2} \sum_{ij} |\alpha_{ij}|^2 \left| 1 -
  \frac{\lambda_i^2}{\eta} \right|.
\end{equation}
Because $\alpha_{ij}$ are amplitudes in a normalized quantum state,
\begin{equation}
p(i) = \sum_j |\alpha_{ij}|^2
\end{equation}
is a probability distribution. We can thus rewrite \eq{ineq1}
as
\begin{eqnarray}
\label{ineq2a}
\Delta & \leq & \frac{1}{2} \sum_i p(i) \left| 1 -
  \frac{\lambda_i^2}{\eta} \right| \\
& \leq & \frac{1}{2} \max_i \left| 1 - \frac{\lambda_i^2}{\eta}
\right|. \label{ineq2b}
\end{eqnarray}
In keeping with the notation from previous sections, we let
$\lambda_0$ denote the smallest singular value of $M$ and $\lambda_1$
the largest. Thus, \eq{ineq2b} yields
\begin{equation}
\label{ineq3}
\Delta \leq \frac{1}{2} \max \left\{ 1-\frac{\lambda_0^2}{\eta},
  \frac{\lambda_1^2}{\eta} - 1 \right\}.
\end{equation}
Similarly,
\begin{eqnarray}
\eta & = & \sum_{jk} |\alpha_{jk}|^2 \lambda_j^2 \\
     & = & p(j) \lambda_j^2 \\
     & \in & [\lambda_0^2, \lambda_1^2]. \label{etarange}
\end{eqnarray}
Applying \eq{etarange} to \eq{ineq3} yields
\begin{equation}
\label{deltakappa}
\Delta \leq \frac{1}{2} \max \left\{ 1-\frac{\lambda_0^2}{\lambda_1^2},
  \frac{\lambda_1^2}{\lambda_0^2} - 1 \right \}.
\end{equation}

As shown in section \ref{apptvd}, the channel capacity $C$ is related
to the total variation distance $\Delta$ according to
\begin{equation}
C \leq \Delta - \Delta \log_2 \Delta
\end{equation}
for $\Delta \leq 1/e$. For small $\Delta$, the $-\Delta \log_2 \Delta$
term dominates the $\Delta$ term. We can simplify further by noting
that for all positive $\Delta$, $\sqrt{\Delta} >
-\Delta \log_2(\Delta)$. Hence, $C = O(\sqrt{\Delta})$. Thus to
achieve a given channel capacity $C$ we need
\begin{equation}
\Delta = \Omega(C^2).
\end{equation}
By \eq{deltakappa}, this implies that achieving a channel capacity $C$
requires
\begin{equation}
|1-\kappa_{\min}^2| = \Omega(C^2),
\end{equation}
where $\kappa_{\min}$ is the condition number of the nonlinear map
$M = M_1 M_0^{-1}$. This implies that one of $M_0$ or $M_1$ must have
condition number at least $\kappa = \sqrt{\kappa_{min}} = \Omega \left(
  (1-C^2)^{1/4} \right)$. This in turn implies Grover search with one
query and $O(\log_\kappa(N))$ applications of the nonlinear map via
the methods of \cite{Abrams_Lloyd}.
\end{proof}

\subsection{Channel Capacity and Total Variation Distance}
\label{apptvd}

Alice wishes to transmit a message to Bob. If she sends zero Bob
receives a sample from $p(B|0)$ and if she sends one Bob receives a
sample from $p(B|1)$. Here, $B$ is a random variable on a finite state
space $\Gamma = \{0,1\ldots,d-1\}$. The only thing we know about this
channel is that
\begin{equation}
\label{tvd}
\left| p(B|0) - p(B|1) \right| = \delta,
\end{equation}
where $| \cdot |$ denotes the total variation distance (\emph{i.e.}
half the $l_1$ distance). In this section we derive an upper bound on
the channel capacity as a function of $\delta$. Specifically, we show
that the (asymptotic) capacity $C$ obeys
\begin{equation}
C \leq \delta - \delta \log_2 \delta.
\end{equation}

Any strategy that Bob could use for decoding Alice's message
corresponds to a decomposition of $\Gamma$ as
\begin{equation}
\label{decomp}
\Gamma = \Gamma_0 \sqcup \Gamma_1
\end{equation}
where $\Gamma_0$ are the outcomes that Bob interprets as zero and
$\Gamma_1$ are the outcomes that Bob interprets as one.

From \eq{tvd} it follows that
\begin{equation}
\label{ineq}
\left| p(b \in \Gamma_0|A=0) - p(b \in \Gamma_0|A=1) \right| \leq \delta.
\end{equation}
(The defining property of total variation distance is that this holds
for any set $\Gamma_0$.)

Let $F=0$ whenever $B \in \Gamma_0$ and $F=1$ whenever $B \in
\Gamma_1$. That is, the random variable $F$ is Bob's guess as to
Alice's message. By standard Shannon theory \cite{Cover_Thomas}, the
channel capacity is the mutual information $I(F;A)$ maximized over
Alice's choice of $p(A)$.

From \eq{ineq} it follows that
\begin{equation}
\left| p(F|A=0) - p(F|A=1) \right| \leq \delta.
\end{equation}
Let $p_\alpha$ be the probability distribution
\begin{equation}
p_\alpha(F) = \alpha p(F|A=0) + (1-\alpha) p(F|A=1)
\end{equation}
for some $\alpha \in [0,1]$. From the elementary properties of total
variation distance it follows that
\begin{equation}
\left| p_\alpha(F) - p(F|A=0) \right| \leq \delta
\end{equation}
and
\begin{equation}
\left| p_\alpha(F) - p(F|A=1) \right| \leq \delta
\end{equation}
for any choice of $\alpha$. In particular, we may set $\alpha =
p(A=0)$, in which case we have
\begin{eqnarray}
\left| p(F) - p(F|A=0) \right| & \leq & \delta \label{pf1} \\
\left| p(F) - p(F|A=1) \right| & \leq & \delta. \label{pf2}
\end{eqnarray}
Next, we recall the Fannes inequality. This says that for any two
density matrices $\rho,\sigma$ on a $d$-dimensional Hilbert space
whose trace distance satisfies $T \leq \frac{1}{e}$
\begin{equation}
\left| S(\rho) - S(\sigma) \right| \leq T \log_2 d - T \log_2 T.
\end{equation}
Specializing to the special case that $\sigma$ and $\rho$ are
simultaneously diagonalizable, one obtains the following statement
about classical entropies.
\begin{cor}
\label{classfan}
Let $p$ and $q$ be two probability distributions on a state space of
size $d$. Let $T$ be the total variation distance between $p$ and
$q$. Suppose $T \leq \frac{1}{e}$. Then
\begin{equation}
| H(p) - H(q) | \leq T \log_2 d - T \log_2 T.
\end{equation}
\end{cor}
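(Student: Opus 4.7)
The plan is to derive the corollary as a direct specialization of the quantum Fannes inequality stated just above it. First I would construct the two diagonal density matrices on the same $d$-dimensional Hilbert space spanned by $\{\ket{0},\ldots,\ket{d-1}\}$:
\begin{equation}
\rho = \sum_{i=0}^{d-1} p(i)\,\ket{i}\bra{i}, \qquad \sigma = \sum_{i=0}^{d-1} q(i)\,\ket{i}\bra{i}.
\end{equation}
These are valid density matrices because $p$ and $q$ are probability distributions on a state space of size $d$, and they are simultaneously diagonal in the computational basis.

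Next I would verify the two identifications that make the specialization work. The von Neumann entropy of a diagonal density matrix equals the Shannon entropy of its diagonal: $S(\rho) = -\sum_i p(i)\log_2 p(i) = H(p)$, and likewise $S(\sigma) = H(q)$. For the distance, since $\rho - \sigma = \sum_i (p(i)-q(i))\ket{i}\bra{i}$ is already diagonal (hence in its own spectral decomposition), the trace distance is
\begin{equation}
\tfrac{1}{2}\|\rho - \sigma\|_1 = \tfrac{1}{2}\sum_i |p(i) - q(i)| = T,
\end{equation}
which coincides with the total variation distance between $p$ and $q$ assumed in the hypothesis.

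Finally I would invoke the Fannes inequality (already stated in the excerpt) with these $\rho$ and $\sigma$. Its hypothesis $T \leq 1/e$ matches the hypothesis of the corollary, and substituting the two identifications above into $|S(\rho) - S(\sigma)| \leq T\log_2 d - T\log_2 T$ yields exactly $|H(p) - H(q)| \leq T\log_2 d - T\log_2 T$, which is the desired bound.

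There is really no substantive obstacle here; the only thing to be careful about is the normalization convention for trace distance (the factor of $\tfrac{1}{2}$), which must match the total variation distance convention used elsewhere in the paper so that the constant in front of $\log_2 d - \log_2 T$ remains $T$ rather than $2T$. Once that bookkeeping is checked, the corollary follows immediately from the quantum statement.
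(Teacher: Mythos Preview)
Your proposal is correct and matches the paper's approach exactly: the paper simply states that the corollary follows by ``specializing to the special case that $\sigma$ and $\rho$ are simultaneously diagonalizable,'' which is precisely the diagonal-embedding argument you spell out in detail.
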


Applying corollary \ref{classfan} to \eq{pf1} and \eq{pf2}
yields\footnote{We have used $H[p]$ to denote the entropy of a
  probability distribution $p$ and $H(R)$ to denote the entropy of a
  random variable $R$.}
\begin{eqnarray}
\left| H \left[ p(F) \right] - H \left[ p(F|A=0) \right] \right| &
\leq & \delta - \delta \log_2 \delta \\
\left| H \left[ p(F) \right] - H \left[ p(F|A=1) \right] \right| &
\leq & \delta - \delta \log_2 \delta.
\end{eqnarray}
Thus,
\begin{eqnarray}
I(F;A) & = & H(F) - H(F|A) \\
       & = & H \left[ p(F) \right] - p(A=0) H \left[ p(F|A=0) \right]
                                   - p(A=1) H \left[ p(F|A=1) \right]\\
       & \leq & \delta - \delta \log_2 \delta,    
\end{eqnarray}
which completes the derivation.

\section*{Appendix B: Violations of the Born Rule}
\stepcounter{section}
\label{born}

In this appendix we consider modification of quantum mechanics in which
states evolve unitarily, but measurement statistics are not given by
the Born rule. This is loosely inspired by the ``state dependence''
resolution of the firewalls paradox, put forth by Papadodimas and Raju
\cite{PR14}. In this theory, the measurement operators $O$ which
correspond to observables are not fixed linear operators, but rather
vary depending on the state they are operating on,
i.e. $O=O(\ket{\psi})$. (In general such dependencies lead to
nonlinearities in quantum mechanics, but Papadodimas and Raju argue
these are unobservable in physically reasonable experiments.) Recently
Marolf and Polchinski \cite{MP15} have claimed that such modifications
of quantum mechanics lead to violations of the Born rule. We do not
take a position either way on Marolf and Polchinski's claim, but use
it as a starting point to investigate how violations of the Born rule
are related to superluminal signaling and computational complexity.

Here we consider violations of the Born rule of the following form:
given a state $\ket{\psi}=\sum_x \alpha_x \ket{x}$, the probability
$p_x$ of seeing outcome $x$ is given by
\begin{equation} p_x = \frac{f(\alpha_x)}{\sum_{x'} f(\alpha_{x'})}
\end{equation} for some function $f(\alpha):\mathbb{C} \rightarrow
\mathbb{R}^+$. We assume that states in the theory evolve unitarily as
in standard quantum mechanics. One could consider more general
violations of the Born rule, in which the function $f$ depends not
only on the amplitude $\alpha_x$ on $x$, but on the amplitudes on
other basis states as well. However such a generalized theory seems
impractical to work with, so we do not consider such a theory here.

We first show that, assuming a few reasonable conditions on $f$
(namely that $f$ has a reasonably behaved derivative and that
measurement statistics do not depend on the normalization of the
state), the only way to modify the Born rule is to set
$f(\alpha)=|\alpha|^{2+\delta}$ for some $\delta\neq 0$. We then show
that in theories where the Born rule is modified, superluminal
signaling is equivalent to a speedup to Grover search. More precisely,
we show that if one can send superluminal signals using states on $n$
qubits, then one can speed up Grover search on a system with $O(n)$
qubits, and vice versa. Hence one can observe superluminal signals on
reasonably sized systems if and only if one can speed up Grover search
using a reasonable number of qubits.

We are not the first authors to examine the complexity theoretic
consequences of modifications to the Born rule.  Aaronson
\cite{Aaronson_postBQP} considered such modifications, and showed that
if $\delta$ is any constant, then such modifications allow for the
solution of \textsf{\#P}-hard problems in polynomial time. Our
contribution is to show the opposite direction, namely that a
significant speedup over Grover search implies the deviation from the
Born rule $\delta$ is large, and to connect this to superluminal
signaling.

We prove our results in several steps. First, in Theorems
\ref{bornimpliesNP} and \ref{bornimpliesFTL}, we show that deviations
in the Born rule by $\delta$ allow the solution of \textsf{NP}-hard
problems and superluminal signaling using $O(1/\delta)$ qubits. As
noted previously, Theorem \ref{bornimpliesNP} follows from the work of
Aaronson \cite{Aaronson_postBQP}, but we include a proof for
completeness.

In Theorem \ref{signalingimpliesborn} we show that, assuming one has a
superluminal signaling protocol using a shared state on $m$ qubits,
the deviation from the Born rule $\delta$ must be $ \geq
\Omega(1/m)$. Likewise in Theorem \ref{searchimpliesdelta} we show
that if one can achieve a constant factor super-Grover speedup using
$m$ qubits, that we must have $\delta \geq \Omega(1/m)$ as
well. Combining these with Theorems \ref{bornimpliesNP} and
\ref{bornimpliesFTL} shows that a super-Grover speedup on $m$ qubits
implies superluminal signaling protocols with $O(m)$ qubits and vice
versa. Supplementary Figure 1 explains the relationship between
these theorems below.

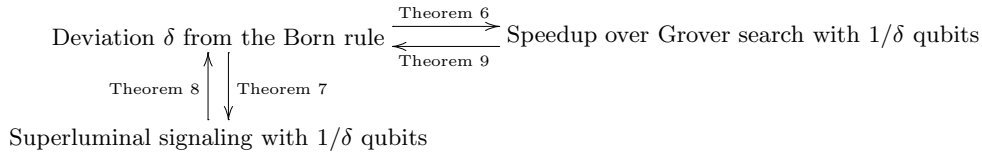
\begin{figure}[h] \label{BornTheorems} 
$\xymatrix{
{\text{Deviation } \delta \text{ from the Born rule}} \ar@<1ex>[d]^{\text{Theorem } \ref{bornimpliesFTL}} \ar@<1ex>[r]^{\text{Theorem \ref{bornimpliesNP}} \hspace{3em} } & {\text{Speedup over Grover search with }1/\delta\text{ qubits}} \ar@<1ex>[l]^{\text{Theorem } \ref{searchimpliesdelta} \hspace{3em}} \\
{\text{Superluminal signaling with }1/\delta\text{ qubits}}\ar@<1ex>[u]^{\text{Theorem }\ref{signalingimpliesborn}}}$
\caption{Relationship between theorems connecting signaling and search.}
\end{figure}
In short, we find that a violation of the Born rule by $\delta$ is
equivalent to allowing a super-Grover speedup and an instantaneous
signaling protocol using $1/\delta$ qubits. Hence in theories in which
$\delta$ is only polynomially suppressed (as a function e.g. of the
number of fields $N$ in Super-Yang-Mills), then such theories allow
for superluminal signaling and violations of the Grover lower bound
with reasonable overheads. On the other hand, our results do not rule
out violations of the Born rule in which $1/\delta$ is unphysically
large.

\subsection{Power law violations are unique}

We now show that, given some reasonable assumptions about the function
$f(\alpha)$, the only possible violation of the Born rule is
given by $f(\alpha)=|\alpha|^p$. In particular we will demand the
following properties of $f$:
\begin{enumerate}
\item Well-behaved derivative: $f(\alpha)$ is continuous and
  differentiable, and $f'(\alpha)$ changes sign at most a finite
  number of times on $[0,1]$
\item Scale invariance: for any $k\in\mathbb{C}$, we have that
  $\frac{f(k\alpha)}{\sum_x f(k\alpha_x)}=\frac{f(\alpha)}{\sum_x
    f(\alpha_x)}$. In other words the calculation of the probability
  $p_x$ of seeing outcome $x$ is independent of the norm or phase of
  the input state; it only depends on the state of the projective
  Hilbert space.
\end{enumerate}

There are a number of other reasonable constraints one could impose;
for instance one could demand that the modified Born rule has to
behave well under tensor products. Suppose you have a state
$\ket{\psi}=\sum_x \alpha_x \ket{y}$ and a state $\ket{\phi}=\sum_y
\beta_y \ket{y}$. A reasonable assumption would be to impose that in
the state $\ket{\psi}\otimes \ket{\phi}$, the probability $p_{xy}$ of
measuring outcome $xy$ should be equal to $p_xp_y$, i.e. a tensor
product state is equivalent to independent copies of each state. More
formally this would state that
\begin{equation}
\frac{f(\alpha_{x}\beta_y)}{\sum_{x'y'}f(\alpha_{x'}\beta_{y'})} =
\frac{f(\alpha_{x})}{\sum_{x'}f(\alpha_{x'})}
\frac{f(\beta_{y})}{\sum_{y'}f(\beta_{y'})}.
\end{equation}
Let us call this the Tensor product property. It will turn out that
the Tensor product property is implied by the Scale invariance
property, which we will show in our proof.

We now show that the Well-behaved derivative and Scale invariance
properties imply $f(\alpha)=|\alpha|^p$ for some $p$.

\begin{theorem} Suppose that $f$ satisfies the Well-behaved derivative
  and Scale invariance properties. Then $f(\alpha)=|\alpha|^p$ for
  some $p\in\mathbb{R}$.
\end{theorem}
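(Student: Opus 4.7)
The plan is to first use Scale invariance to extract a multiplicative functional equation $f(k\alpha) = g(k)f(\alpha)$, then exploit it twice: once along the unit circle to show that $f$ depends only on $|\alpha|$, and once along the positive real axis to invoke a regularity argument for the classical multiplicative Cauchy equation, forcing a power law.

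First, I would derive the functional equation. Consider a two-dimensional state with nonzero amplitudes $(\alpha,\beta)$. The probability of outcome $0$ is $f(\alpha)/[f(\alpha)+f(\beta)]$, so Scale invariance under $k\in\mathbb{C}$ gives $f(k\alpha)/f(\alpha) = f(k\beta)/f(\beta)$. This ratio therefore depends only on $k$; call it $g(k)$, and setting $\alpha=1$ identifies $g(k) = f(k)/f(1)$, yielding
\begin{equation}
f(k\alpha) = \frac{f(k)\, f(\alpha)}{f(1)} \qquad \text{for all } k,\alpha \in \mathbb{C}. \tag{$\ast$}
\end{equation}
(If $f$ vanished at any nonzero point, $(\ast)$ would force $f$ to vanish away from the origin, making probabilities ill-defined.)

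Second, I would show $f$ is rotation-invariant. Specializing $(\ast)$ to $k = e^{i\theta}$ and $\alpha = e^{i\phi}$ and setting $G(\theta) := f(e^{i\theta})/f(1)$ gives the multiplicative Cauchy equation $G(\theta+\phi) = G(\theta)G(\phi)$ on $\mathbb{R}/2\pi\mathbb{Z}$. Continuity of $f$ and $2\pi$-periodicity in $\theta$ force the only positive solution, $G\equiv 1$. Hence $f(e^{i\theta})=f(1)$ for all $\theta$, and combining this with $(\ast)$ we can write $f(\alpha) = F(|\alpha|)$ for some positive continuous $F:\mathbb{R}_{\geq 0}\to\mathbb{R}_{>0}$.

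Third, I would reduce to the real-variable problem. Restricting $(\ast)$ to real positive $k,\alpha$ and setting $H(r):=F(r)/F(1)$ gives $H(rs)=H(r)H(s)$ with $H(1)=1$. The Well-behaved derivative hypothesis implies that $F$ is piecewise monotone on $[0,1]$, hence monotone on some subinterval. A monotone solution of the multiplicative Cauchy equation is a power function: letting $K(x):=\log H(e^x)$ reduces to additive Cauchy $K(x+y)=K(x)+K(y)$, whose monotone solutions are linear, so $H(r) = r^p$ for some $p\in\mathbb{R}$. Thus $f(\alpha) = F(1)|\alpha|^p$, and since the overall constant $F(1)$ cancels in the probability formula, we may set $F(1)=1$ to obtain $f(\alpha)=|\alpha|^p$.

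The main obstacle is the rotation-invariance step: confirming that Scale invariance under arbitrary \emph{complex} $k$ (rather than merely positive real $k$) forces $f$ to be a function of the modulus alone. Once this is established, the reduction to real Cauchy equations and the regularity argument supplied by the sign-change hypothesis on $f'$ are standard.
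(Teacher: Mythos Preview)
Your proposal is correct and follows essentially the same architecture as the paper: derive the multiplicative identity $f(\alpha\beta)=f(\alpha)f(\beta)/f(1)$ from Scale invariance on a two-level state, use it on the unit circle to kill the phase dependence, and then analyze the resulting real multiplicative Cauchy equation. The rotation-invariance step you flag as the ``main obstacle'' is handled identically in the paper (there phrased as ``$g$ is a real one-dimensional representation of $U(1)$, hence $g\equiv 1$'').

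The only substantive difference is in the real-axis regularity step. You invoke the standard theorem that a solution of the multiplicative Cauchy equation which is monotone (indeed, bounded) on a subinterval must be a power; the finite-sign-change hypothesis on $f'$ supplies the needed monotonicity. The paper instead argues directly: for each $x\in(0,1)$ set $p(x)=\log f(x)/\log x$, so that $f(x^k)=x^{kp(x)}$ for all $k\in\mathbb{N}$; if $p(x)\neq p(x')$ for some $x,x'$, then the sequences $\{x^k\}$ and $\{(x')^k\}$ force $f$ to oscillate between the curves $t\mapsto t^{p(x)}$ and $t\mapsto t^{p(x')}$ infinitely often near $0$, contradicting the finite-sign-change assumption on $f'$. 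Your route is cleaner if one is willing to quote the Cauchy-equation regularity result; the paper's route is more self-contained and shows exactly how the sign-change hypothesis is consumed. One small expository point: you pass from ``monotone on a subinterval'' to ``monotone solutions are linear'' for the additive Cauchy equation; strictly the result you need is that \emph{boundedness on an interval} already forces linearity, which is what monotonicity on a subinterval actually gives you.
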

\begin{proof}
First, note that the functions $f(\alpha)$ and $cf(\alpha)$ give the
same measurement statistics for any scalar $c\in\mathbb{R}$. To
eliminate this redundancy in our description of $f$, we'll choose $c$
such that $f(1)=1$.

For any $\alpha\in\mathbb{C}$, consider the (non-normalized) state
$\alpha\ket{0}+\ket{1}$. By scale invariance, for any
$\beta\in\mathbb{C}$, we must have that
\begin{equation}
\frac{f(\alpha)}{f(\alpha)+f(1)}=\frac{f(\alpha\beta)}{f(\alpha\beta) + f(\beta)}
\end{equation}
which implies that
$f(\alpha)f(\beta)=f(\alpha\beta)f(1)=f(\alpha\beta)$ for all
$\alpha,\beta \in\mathbb{C}$. One can easily check that this implies
the tensor product property.

In particular this holds for any phase, so if
$\alpha=|\alpha|e^{i\theta}$, we must have that
$f(\alpha)=\hat{f}(|\alpha|)g(\theta)$ for some functions
$\hat{f}:\mathbb{R}^{\geq0}\rightarrow \mathbb{R}^+$ and
$g:[0,2\pi)\rightarrow \mathbb{R}^+$.  Note that taking $g\rightarrow
cg$ and $\hat{f}\rightarrow \hat{f}/c$ leaves $f$ invariant for any
scalar $c\in\mathbb{R}^+$. So without loss of generality, since
$f(1)=1$, we can set $\hat{f}(1)=g(0)=1$ as well by an appropriate
choice of scalar $c$. Now, for any phases $e^{i\theta}$ and
$e^{i\phi}$, we have
$f(e^{i\theta})f(e^{i\phi})=f(e^{i(\theta+\phi)})$. Since
$\hat{f}(1)=1$ this implies $g(\theta)g(\phi)=g(\theta+\phi)$,
i.e. $g$ must be a real one-dimensional representation of $U(1)$. The
only such representation is $g=1$, hence $f(\alpha)=f(|\alpha|)$.

Now we will show that $f(x)=x^p$ for some $p$. Consider any $0<x<1$
and $0<x'<1$ where $x\neq x'$. Since
$f(\alpha)f(\beta)=f(\alpha\beta)$, we must have that $f(x^k)=f(x)^k$
and $f(x'^k)=f(x')^k$ for any $k\in\mathbb{N}$. Let
$p=\log(f(x))/\log(x)$ and $p'=\log(f(x'))/\log(x')$. Then the above
equations imply that $f(x^k)=x^{kp}$ and $f(x'^k)=x'^{kp'}$ for all
$k\in\mathbb{N}$.

Now suppose by way of contradiction that there exist $x,x'$ such that
$p\neq p'$. Since both $x<1$ and $x'<1$, as $k\rightarrow\infty$ we
have that $f(x^{kp})\rightarrow0$ and $f(x'^{kp})\rightarrow
0$. However, the sequence of points $f(x),f(x^2),f(x^3),\ldots$
approaches zero along the curve $h(x)=x^p$ while the sequence of
points $f(x'),f(x'^2),f(x'^3),\ldots$ approaches zero along the curve
$h'(x)=x^{p'}$. This implies $f$ must oscillate infinitely many times
between the curves $h$ and $h'$, which implies $f'$ must change signs
infinitely many times by the intermediate value theorem. This
contradicts the Well-behaved derivative assumption.

Hence we have for all $0<x<1$, $f(x)=x^p$ for some $p$. Now if $x>1$,
we have $f(x)f(1/x)=f(1)=1$. Since $1/x <1$, then we have
$f(1/x)=1/x^p$, so $f(x)=x^p$ as well. Also $f(1)=1^p=1$, and by
continuity $f(0)=0$. Hence for all $x\geq0$ we must have $f(x)=x^p$
for some $p$, as claimed.
\end{proof}

\subsection{Born rule violations imply signaling and super-Grover speedup}

We first show that large violations of the Born rule imply a large
speedup to Grover search and allow for large amounts of superluminal
signaling. This was previously shown by Aaronson
\cite{Aaronson_postBQP}, but for completeness we will summarize the
proof here.

\begin{theorem}[Aaronson \cite{Aaronson_postBQP} Theorem
  6] \label{bornimpliesNP}Suppose that the Born rule is modified such
  that $f(\alpha)=|\alpha|^{2+\delta}$ where $\delta\neq 0$. Then one
  can solve \textsf{PP} problems on instances of size $n$ in time
  $O(\frac{n^2}{|\delta|})$. In particular one can search an unordered
  list of $2^n$ indices in $O(\frac{n^2}{|\delta|})$ time.
\end{theorem}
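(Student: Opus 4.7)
The theorem is due to Aaronson, and I would follow his proof from \cite{Aaronson_postBQP}. The approach has two stages: first reduce Grover search and \textsf{PP} problems to the task of distinguishing two exponentially close amplitudes in a single output qubit of a poly-size quantum circuit, then use the $(2+\delta)$-power Born rule to inflate that amplitude gap into a constant gap in the measurement statistics.

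For the reduction, given a Boolean predicate $f:\{0,1\}^n\to\{0,1\}$ computable by a poly-size circuit with a single oracle query, I would prepare
\[
\frac{1}{\sqrt{N}}\sum_{x\in\{0,1\}^n}\ket{x}\ket{f(x)}, \qquad N=2^n,
\]
and by standard post-processing extract a single-qubit state $\alpha\ket{0}+\beta\ket{1}$ with $|\beta|^2 = s(f)/N$, where $s(f) = |f^{-1}(1)|$. The \textsf{PP} decision becomes ``is $|\beta|^2 \gtrless 1/2$?'' and Grover search becomes ``is $\beta = 0$ or $|\beta|^2 \geq 1/N$?''; in both cases the amplitude gap on the hard side can be as small as $1/N$.

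For the amplification, the key observation is that under the $|\alpha|^{2+\delta}$ rule the probability of an outcome is reweighted by a factor of $|\alpha|^{\delta}$ relative to standard quantum mechanics, which behaves like a ``fractional postselection'' of strength $|\delta|$. Chaining $k$ such fractional steps coherently---using ancillas and reversible arithmetic so that all intermediate gates are unitary and the $(2+\delta)$-rule only enters at the final measurement---amounts to an effective postselection of strength $\Theta(k|\delta|)$. Choosing $k = \Theta(n/|\delta|)$ therefore simulates the single postselection step needed to solve \textsf{PP} inside Aaronson's model \textsf{PostBQP}=\textsf{PP}, and promotes an initial amplitude ratio of $1+\Theta(1/N)$ into a $\Theta(1)$ probability gap, so that a single final measurement decides the problem with constant bias and $O(1)$ Chernoff repetitions drive the error below any constant. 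Each of the $k$ layers uses $O(n)$ gates, giving total cost $O(n^2/|\delta|)$, which also covers Grover search as a special case.

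The main technical obstacle is constructing the chained amplification gadget with genuinely unitary circuitry, since raising an amplitude to a power is not itself a unitary operation. Aaronson's fix, which I would adopt, is to spread the amplification across coherent ancilla registers so that the $(2+\delta)$ exponent manifests only through the renormalization at the final read-out. Verifying that the effective amplification exponent is linear rather than quadratic in $\delta$---that is, $\Theta(k|\delta|)$ rather than $\Theta(k\delta^2)$---requires care in the Taylor expansion of $|\alpha|^{2+\delta}$ around $\delta = 0$ to isolate the leading linear-in-$\delta$ term; once the ancilla structure is in place this bookkeeping is routine, and the claimed linear-in-$1/|\delta|$ overhead drops out.
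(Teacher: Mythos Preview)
Your high-level strategy---simulate postselection with the modified Born rule, then invoke \textsf{PostBQP}=\textsf{PP}---matches the paper, but the mechanism you sketch is more elaborate than what is actually needed, and in one place misleading.

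The paper's gadget is a single, non-iterative construction: to postselect a qubit onto $\ket{0}$, append $k$ fresh ancillas in $\ket{0}$ and apply a Hadamard to each, controlled on that qubit being $\ket{1}$. This spreads the $\ket{1}$ branch over $2^k$ computational-basis states with amplitude $2^{-k/2}$ each. Under the $|\cdot|^{2+\delta}$ rule the total weight of the $\ket{1}$ branch becomes $2^k\cdot 2^{-k(2+\delta)/2}=2^{-k\delta/2}$ times its original---an \emph{exact} suppression. Setting $k=n/\delta$ gives $2^{-n}$ suppression and hence effective postselection. There is no ``chaining of $k$ fractional steps,'' and no Taylor expansion is needed to verify linearity in $\delta$; that part of your plan is unnecessary and would only obscure the argument.

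The $O(n^2/|\delta|)$ bound also arises from a different decomposition than yours: Aaronson's \textsf{PostBQP} algorithm for \textsf{PP} uses $O(n)$ postselections, and each one costs $O(n/|\delta|)$ ancillas and gates---not $k=\Theta(n/|\delta|)$ ``layers'' of $O(n)$ gates each. Same product, but if you write it out your way you will find the layers have no natural meaning. Finally, your single-qubit amplitude-gap reduction is redundant: once arbitrary-qubit postselection is simulated, you invoke \textsf{PostBQP}=\textsf{PP} as a black box and are done, without re-deriving the counting reduction inside it.
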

\begin{proof}
We will use the modified Born rule to simulate postselection. Suppose
one has a state $\ket{\Psi}=\sum_x (\alpha_x \ket{0} + \beta_x
\ket{1}) \ket{x}$ and wishes to simulate postselection of the first
qubit on the state $\ket{0}$.  Suppose $\delta>0$; the case $\delta<0$
follows analogously. To simulate postselection on zero, simply append
$k$ ancilla qubits in the $\ket{0}$ state. Then apply a Hadamard to
each of the ancilla qubits controlled on the first qubit being a
1. The state now evolves to
\begin{equation}
\sum_x \left( \alpha_x \ket{0} \ket{x} \ket{0}^{n/\delta} + \beta_x
  \ket{1} \ket{x} \sum_{y} 2^{-k/2} \ket{y} \right)
\end{equation}
When measuring this state in the computational basis, the probability
of measuring a $0$ on the first qubit is proportional to $\sum_x
|\alpha_x|^{2+\delta}$, while the probability of getting a $1$ on the
first qubit is proportional to $2^{-k\delta/2} \sum_x
|\beta_x|^{2+\delta}$. Hence setting $k=n/\delta$, the probability of
getting a $1$ on the first qubit is exponentially suppressed by a
factor of $2^{-n}$. This effectively postselects the first qubit to
have value $0$ as desired. The rest of the proof follows from the fact
that Aaronson's \textsf{PostBQP} algorithm to solve \textsf{PP}-hard
problems on instances of size $n$ runs in time $O(n)$ and involves
$O(n)$ postselections; hence using this algorithm to solve
\textsf{PP}-hard problems when the Born rule is violated takes time
$O(\frac{n^2}{\delta})$ as claimed.
\end{proof}

Aaronson's result also implies that large violations of the Born rule imply one can send
superluminal signals with small numbers of qubits.

\begin{theorem}[Aaronson \cite{Aaronson_postBQP}] \label{bornimpliesFTL} Suppose that the Born rule is
  modified such that $f(\alpha)=|\alpha|^{2+\delta}$ where $\delta\neq
  0$. Then one can transmit a bit superluminally in a protocol
  involving a state on $O(n/|\delta|)$ qubits which succeeds with
  probability $1-2^{-n}$. Note one can use this protocol to send
  either classical bits or quantum bits.
\end{theorem}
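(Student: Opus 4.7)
The plan is to leverage the postselection simulation gadget from the proof of Theorem~\ref{bornimpliesNP}. Recall that under the modified Born rule with $f(\alpha)=|\alpha|^{2+\delta}$, attaching $k$ ancillas in $\ket{0}$ and applying a Hadamard to each ancilla controlled on a designated qubit effectively postselects that qubit onto $\ket{0}$ (when $\delta>0$; onto $\ket{1}$ when $\delta<0$), with failure probability scaling like $2^{-k|\delta|/2}$. Since postselection applied to half of an EPR pair is the standard mechanism for superluminal signaling, we exploit exactly this.

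The protocol is as follows. Alice and Bob share the state $\frac{1}{\sqrt{2}}(\ket{0}_A\ket{0}_B+\ket{1}_A\ket{1}_B)$, distributed in advance. To transmit a classical bit $b\in\{0,1\}$, Alice applies $X^b$ locally (so that her qubit is perfectly correlated with the value she wishes Bob to observe) and then runs the postselection gadget above on her qubit with $k=\lceil 2n/|\delta|\rceil$ ancillas. Bob simply measures his half in the computational basis. Since the gadget touches only Alice's qubit and her ancillas, Alice's qubit remains perfectly correlated with Bob's throughout; forcing hers to $\ket{b}$ forces his to $\ket{b}$. Summing the joint probabilities under $f(\alpha)=|\alpha|^{2+\delta}$ over Alice's outcomes yields Bob a marginal of $1/(1+2^{-k|\delta|/2})\geq 1-2^{-n}$, using the weight computation already performed for Theorem~\ref{bornimpliesNP}. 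For transmitting a quantum bit, replace the shared EPR pair by $\frac{1}{\sqrt{2}}(\ket{0}_A\ket{\phi_0}_B+\ket{1}_A\ket{\phi_1}_B)$, so that Bob is left holding $\ket{\phi_b}$.

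The main subtlety is interpretational: one must specify how the modified Born rule applies to spacelike-separated measurements, i.e.\ that Bob's marginal outcome distribution is obtained by summing joint probabilities $f(\alpha_{a,b})/Z$ over Alice's unseen outcomes $a$. This is the convention the paper already adopts in its prefatory remarks (measurements produce probabilistic ensembles of post-measurement pure states whose dynamics are computed separately), and it is precisely this convention that allows a modification of the Born rule to generate signaling at all; under an Everettian prescription the conclusion can differ. Given this convention, the analysis reduces mechanically to the suppression ratio established in the proof of Theorem~\ref{bornimpliesNP}, and no further obstacle arises.
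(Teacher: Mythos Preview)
Your argument is correct and rests on the same core mechanism as the paper: use the ancilla-Hadamard gadget from Theorem~\ref{bornimpliesNP} to simulate postselection on Alice's side of a shared entangled pair, so that Bob's outcome is steered with failure probability $2^{-k|\delta|/2}$. The paper packages this as \emph{postselected teleportation}: Alice holds the (qu)bit to be sent, performs the standard Bell measurement against her half of an EPR pair, and uses the gadget to force the outcome $00$, so no Pauli correction is needed on Bob's side. Your version instead applies $X^b$ directly to Alice's half of the EPR pair and then postselects that single qubit---a slightly more elementary steering protocol that yields the same classical-bit bound with the same resource count. One small wording slip: after $X^b$ and postselection you are forcing Alice's qubit to $\ket{0}$ (not $\ket{b}$), which leaves Bob with $\ket{b}$; the computation you give is nonetheless correct. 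The only place the two routes really diverge is the quantum-bit claim: the paper's teleportation phrasing transmits an \emph{arbitrary unknown} qubit held by Alice, whereas your variant requires the target states $\ket{\phi_0},\ket{\phi_1}$ to be pre-encoded in the shared resource, which is a weaker notion of ``sending a qubit.'' If you want to match the statement fully, simply invoke teleportation as the paper does; otherwise the argument is essentially the same.
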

\begin{proof}
The proof follows almost immediately from the proof of Theorem
\ref{bornimpliesNP}. Suppose that Alice wishes to send a bit 0 or 1 to
Bob. Alice and Bob can perform the standard teleportation protocol
\cite{teleportation}, but instead of Alice sending her classical
measurement outcomes to Bob, Alice simply postselects her measurement
outcome to be 00 (i.e. no corrections are necessary to Bob's state)
using the trick in Theorem \ref{bornimpliesNP}. If Alice uses
$O(n/|\delta|)$ qubits to simulate the postselection, and then measures
her qubits, she will obtain outcome 00 with probability $1-2^{-n}$ and
the bit will be correctly transmitted as desired.
\end{proof}

\subsection{Signaling implies large power law violation}

We now show that if one can send a superluminal signal with bias
$\epsilon$ using a shared state on $n$ qubits, then the violation of
the Born rule $\delta$ must satisfy $|\delta| \geq O(\epsilon/n)$. Hence
$\delta$ and $\epsilon$ must be polynomially related. Put less
precisely, if a physically reasonable experiment can send a
superluminal signal with a nontrivial probability, then there must be
a nontrivial (and hence observable) violation of the Born rule. This
in turn, implies by Theorem \ref{bornimpliesNP} that one can solve
\textsf{NP}-hard problems with a reasonable multiplicative overhead.

\begin{theorem}\label{signalingimpliesborn}
Suppose that the Born rule is modified such that
$f(\alpha)=|\alpha|^{2+\delta}$, and suppose there is a signaling
protocol using an entangled state on $n$ qubits signaling with
probability $\epsilon$. Then $|\delta| \geq O(\frac{\epsilon}{n})$.
\end{theorem}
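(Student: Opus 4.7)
The plan is to bound Bob's distinguishing advantage between Alice's two possible messages using the fact that, in the limit $\delta \to 0$, the modified Born rule reduces to the standard one and thus forbids signaling. Without loss of generality I would cast the protocol into the following canonical form: Alice and Bob share a pure state $\ket{\Psi}_{AB}$ on $n$ qubits; Alice, depending on the bit $k\in\{0,1\}$ she wishes to transmit, applies a unitary $U_k$ to her subsystem; then a full projective measurement is performed in a product computational basis $\{\ket{i}_A\ket{j}_B\}$ with Alice's outcome $i$ subsequently discarded. More complicated protocols with intermediate measurements on Alice's side can be deferred to the end at the cost of ancilla qubits, which we absorb into the $n$-qubit count. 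Letting $c_{ij}^{(k)} := \langle i,j|(U_k\otimes\id)|\Psi\rangle$, Bob's marginal outcome distribution under the modified Born rule is
\[
p_k(j) \;=\; \frac{\sum_i |c_{ij}^{(k)}|^{2+\delta}}{\sum_{i,j'} |c_{ij'}^{(k)}|^{2+\delta}},
\]
and the signaling probability $\epsilon$ is at most a constant multiple of the total variation distance $\Delta := \tfrac{1}{2}\sum_j |p_0(j)-p_1(j)|$.

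Next I would Taylor expand $|c|^{2+\delta} = |c|^2(1+\delta \ln|c|) + O(\delta^2)$ and abbreviate $q_j := \sum_i |c_{ij}^{(k)}|^2$ (which is $k$-independent, since ordinary quantum no-signaling guarantees that Alice's local unitary cannot alter Bob's reduced density matrix), $\mu_j^{(k)} := \sum_i |c_{ij}^{(k)}|^2 \ln|c_{ij}^{(k)}|$, and $M^{(k)} := \sum_j \mu_j^{(k)}$. A direct calculation then yields
\[
p_k(j) \;=\; q_j + \delta\bigl(\mu_j^{(k)} - q_j M^{(k)}\bigr) + O(\delta^2),
\]
so $p_0(j)$ and $p_1(j)$ agree at zeroth order in $\delta$ and their difference is purely linear in $\delta$ at leading order. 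Applying the triangle inequality and using $|M^{(0)}-M^{(1)}|\leq \sum_j(|\mu_j^{(0)}|+|\mu_j^{(1)}|)$ gives $\Delta \leq |\delta|\bigl(\sum_j|\mu_j^{(0)}|+\sum_j|\mu_j^{(1)}|\bigr) + O(\delta^2)$.

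Finally I would bound each entropy-like sum. Since $|c_{ij}^{(k)}|\in[0,1]$ implies $|\mu_j^{(k)}| = -\mu_j^{(k)}$, we obtain
\[
\sum_j |\mu_j^{(k)}| \;=\; -\sum_{ij} |c_{ij}^{(k)}|^2 \ln |c_{ij}^{(k)}| \;=\; \tfrac{1}{2} H\bigl(\{|c_{ij}^{(k)}|^2\}\bigr) \;\leq\; \tfrac{n\ln 2}{2},
\]
where $H$ denotes the Shannon entropy and the bound uses that the squared-amplitude distribution is supported on at most $2^n$ outcomes. Combining the estimates yields $\epsilon \leq O(\Delta) \leq O(|\delta| n) + O(\delta^2)$, which rearranges to $|\delta| \geq \Omega(\epsilon/n)$ as claimed. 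The main obstacle is justifying the linear Taylor expansion when some amplitudes $|c_{ij}^{(k)}|$ may be exponentially small in $n$, so that the formal expansion parameter $\delta \ln|c|$ is not automatically small; I would handle this by a case split, since if $|\delta|n = \Omega(1)$ the target bound follows trivially from $\epsilon\leq 1$, while in the complementary regime $|\delta| n \ll 1$ the expansion is under control and the $O(\delta^2)$ remainder is genuinely subleading. A secondary subtlety is that the modified Born rule is only unambiguous for complete measurements, which I resolve by prescribing Alice's computational basis as the canonical choice of refinement.
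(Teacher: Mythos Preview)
Your proof is correct and follows essentially the same approach as the paper: Taylor expand the modified Born probabilities to first order in $\delta$, invoke ordinary no-signaling to cancel the zeroth-order term, and bound the surviving first-order correction by a Shannon entropy of size at most $n$. The only cosmetic differences are that the paper bounds the normalization denominator via $N^{-\delta/2}\le\sum|\alpha|^{2+\delta}\le 1$ before expanding (rather than expanding the full quotient as you do), and that your case split on $|\delta|n$ to control the remainder is more explicit than the paper's formal $O(\delta^2)$ treatment.
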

\begin{proof}
Consider the most general signaling protocol to send a bit of
information. Suppose that Alice and Bob share an entangled state
$\ket{\Phi}$ on $n$ qubits, $m$ of which are held by Bob and $n-m$ of
which are held by Alice. To send a zero, Alice performs some unitary
$U_0$ on her half of the state, and to send a one, Alice performs some
unitary $U_1$ on her half of the state. Bob then measures in some
fixed basis $B$. This is equivalent to the following protocol: Alice
and Bob share the state $\ket{\Psi} = U_0 \ket{\Phi}$ ahead of time,
and Alice does nothing to send a $0$, and applies $U=U_1U_0^\dagger$
to obtain $\ket{\Psi'}=U\ket{\Psi}$ send a $1$. Then Bob measures in
basis $B$. We say the protocol succeeds with probability $\epsilon$ if
the distributions seen by Bob in the case Alice is sending a $0$ vs. a
$1$ differ by $\epsilon$ in total variation distance. As shown in
section \ref{apptvd} of Appendix B, the total variation distance is polynomially
related to the capacity of the resulting classical communication
channel.

Let $\alpha_{xy}$ be the amplitude of the state $\ket{x}\ket{y}$ in
$\ket{\Psi}$, where the $\ket{x}$ is an arbitrary basis for Alice's
qubits and $\ket{y}$ are given by the basis $B$ in which Bob measures
his qubits.  Let $\alpha'_{xy}$ be the amplitude of $\ket{x}\ket{y}$
in the state $\ket{\Psi'}$, so we have $\alpha'_{xy}=\sum_{x'} U_{xx'}
\alpha_{x'y}$. In short
\begin{align}
\ket{\Psi} = \sum_{xy} \alpha_{xy}\ket{x}\ket{y} & &
\ket{\Psi'}=\sum_{xy}\alpha_{xy}' \ket{x} \ket{y}
\end{align}

Assume that $\sum_{x,y} |\alpha_{xy}|^2 =1$, i.e. the state is
normalized in the $\ell_2$ norm. Since $U$ is unitary this implies the
state $U\ket{\psi'}$ is normalized in the $\ell_2$ norm as well.

Now suppose that the protocol has an $\epsilon$ probability of
success. Let $D_0$ be the distribution on outcomes $y\in \{0,1\}^m$
when Alice is sending a zero, and $D_1$ be the distribution when Alice
is sending a $1$. Let $D_b(y)$ denote the probability of obtaining
outcome $y$ under $D_b$. Then the total variation distance between
$D_0$ and $D_1$, given by $\frac{1}{2} \sum_y|D_0(y)-D_1(y)|$, must be
at least $\epsilon$. Equivalently, there must be some event $S \subset
\{0,1\}^m$ for which
\begin{equation}
\sum_{y\in S} D_0(y)-D_1(y) \geq \epsilon
\end{equation}
and for which, for all $y\in S$, we have $D_0(y) > D_1(y)$.

Assume for the moment that $\delta>0$; an analogous proof will hold in
the case $\delta<0$. Let $N=2^n$ be the dimension of the Hilbert space
of $\ket{\Psi}$. Plugging in the probabilities $D_0(y)$ and $D_1(y)$
given by the modified Born rule, we obtain
\begin{align}
\epsilon &\leq \sum_{x\in \{0,1\}^{n-m}, y\in S} \frac{|\alpha_{xy}|^{2+\delta}}{\sum_{x'y'} |\alpha_{x'y'}|^{2+\delta}} -\frac{|\alpha'_{xy}|^{2+\delta}}{\sum_{x'y'} |\alpha'_{x'y'}|^{2+\delta}}\\
&\leq  \sum_{x\in \{0,1\}^{n-m}, y\in S} N^{\delta/2}|\alpha_{xy}|^{2+\delta} - |\alpha'_{xy}|^{2+\delta}  \label{eq:usingnorms}\\
&= \sum_{x\in \{0,1\}^{n-m}, y\in S} \left(1+\frac{\delta}{2}\log(N)\right) |\alpha_{xy}|^{2}\left(1+\delta \log|\alpha_{xy}|\right) - |\alpha'_{xy}|^{2} (1+\delta \log|\alpha'_{xy}|) + O(\delta^2) \label{eq:taylordelta} \\
&=  \sum_{x\in \{0,1\}^{n-m}, y\in S} \left( |\alpha_{xy}|^2-|\alpha'_{xy}|^2\right) + \frac{\delta}{2} \log(N) |\alpha_{xy}|^{2} + \delta  \left(|\alpha_{xy}|^2\log|\alpha_{xy}| - |\alpha'_{xy}|^2\log|\alpha'_{xy}|\right)
+ O(\delta^2) \\ 
&\leq \frac{\delta}{2} \log(N) + \frac{\delta}{2}  \sum_{x\in \{0,1\}^{n-m}, y\in S}\left(|\alpha_{xy}|^2\log|\alpha_{xy}|^2 - |\alpha'_{xy}|^2\log|\alpha'_{xy}|^2\right) + O(\delta^2)  \label{eq:deltasimplify} \\
&\leq \frac{\delta}{2} \log(N) + \frac{\delta}{2} \log(N) + O(\delta^2) = \delta n + O(\delta^2) \label{eq:deltaentropy}
\end{align}
On line (\ref{eq:usingnorms}) we used the fact that for any vector
$\ket{\phi}=\sum_y \beta_y \ket{y}$ of $\ell_2$ norm 1 over a Hilbert
space of dimension $N$, we have $N^{-\delta/2} \leq \sum_y
|\beta_y|^{2+\delta}\leq 1$ when $\delta>0$. On line
(\ref{eq:taylordelta}) we expanded to first order in $\delta$. On line
(\ref{eq:deltasimplify}) we used the fact that the first term is zero
because applying a unitary to one half of a system does not affect
measurement outcomes on the other half of the system and the second
sum is upper bounded by 1. On line (\ref{eq:deltaentropy}) we used the
fact that the sum is given by a difference of entropies of (possibly
subnormalized) probability distributions, each of which is between
zero and $\log(N)$.

Hence we have that $\delta n + O(\delta^2 )\geq \epsilon$, so to first
order in $\delta$ we must have $\delta \geq \epsilon/n$ as claimed.
\end{proof}

The following corollary follows from Theorem \ref{bornimpliesNP}, and
hence we've shown that superluminal signaling implies a super-Grover
speedup. 
\begin{cor} Suppose that the Born rule is modified such that
  $f(\alpha)=|\alpha|^{2+\delta}$, and that there is a signaling
  protocol using an entangled state on $n$ qubits which signals with
  probability $\epsilon$. Then there is an algorithm to solve
  \textsf{\#P}-hard and \textsf{NP}-hard instances of size $m$
  (e.g. \textsf{\#SAT on $m$ variables}) in time
  $O(m^2n/\epsilon)$. \label{signalingimpliessearch}
\end{cor}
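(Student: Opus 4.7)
The plan is to combine the two main theorems of this appendix in a one-line chain of implications. First I would invoke Theorem \ref{signalingimpliesborn}: given a superluminal signaling protocol using an entangled state on $n$ qubits which succeeds with probability (total variation bias) $\epsilon$, we obtain the quantitative lower bound $|\delta| \geq \Omega(\epsilon/n)$ on the Born-rule deviation parameter. This is exactly the statement already established, so no further work is needed at this step.

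Next I would feed this bound into Theorem \ref{bornimpliesNP}, which states that with Born-rule deviation $\delta$ one can solve \textsf{\#P}-hard (and hence \textsf{NP}-hard) problems on instances of size $m$ in time $O(m^2/|\delta|)$. Substituting the lower bound $|\delta| = \Omega(\epsilon/n)$ yields a running time of
\[
O\!\left(\frac{m^2}{|\delta|}\right) \;=\; O\!\left(\frac{m^2 n}{\epsilon}\right),
\]
which is precisely the claimed bound. As a concrete instantiation I would note that \textsf{\#SAT} on $m$ variables is \textsf{\#P}-hard and has instance size polynomial in $m$, so it falls under this conclusion.

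There is essentially no obstacle here, since the corollary is just the composition of two already-proved quantitative statements; the only subtlety to flag is that the ``probability'' $\epsilon$ referenced in the corollary should be interpreted consistently with the total variation distance notion used in Theorem \ref{signalingimpliesborn} (which, by the discussion in Section \ref{apptvd}, is polynomially related to the classical channel capacity), so that the hypothesis of Theorem \ref{signalingimpliesborn} is indeed met. Given this, the proof is a two-sentence deduction and can be presented simply as ``Apply Theorem \ref{signalingimpliesborn} to obtain $|\delta| = \Omega(\epsilon/n)$, then apply Theorem \ref{bornimpliesNP}.''
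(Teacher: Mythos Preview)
Your proposal is correct and matches the paper's own argument exactly: the paper states that the corollary ``follows from Theorem \ref{bornimpliesNP},'' meaning one combines the bound $|\delta|\ge\Omega(\epsilon/n)$ from Theorem \ref{signalingimpliesborn} with the $O(m^2/|\delta|)$ running time of Theorem \ref{bornimpliesNP} to obtain $O(m^2 n/\epsilon)$. Your remark about interpreting $\epsilon$ as the total variation bias is also consistent with the paper's conventions.
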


\subsection{Super-Grover speedup implies signaling}

We now show that even a mild super-Grover speedup implies that
$\delta$ is large, and hence one can send superluminal signals. Our
proof uses the hybrid argument of Bennett, Bernstein, Brassard and
Vazirani \cite{BBBV} combined with the proof techniques of Theorem
\ref{signalingimpliesborn}.

\begin{theorem}Suppose that the Born rule is modified such that
  $f(\alpha)=|\alpha|^{2+\delta}$, and there is an algorithm to search
  an unordered list of $N$ items with $Q$ queries using an algorithm
  over a Hilbert space of dimension $M$. Then
\begin{equation}
\frac{1}{6} \leq \frac{2 Q}{\sqrt{N}} + |\delta| \log (M) +O(\delta^2).
\end{equation}
\label{searchimpliesdelta}
\end{theorem}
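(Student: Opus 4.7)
The plan is to adapt the BBBV hybrid argument to the modified Born rule, reusing the Taylor-expansion technique from the proof of Theorem~\ref{signalingimpliesborn}. The crucial observation is that the Born-rule modification affects only the measurement step, whereas unitary evolution between queries is standard, so the BBBV hybrid inequality applies verbatim to the pre-measurement states. Write $\ket{\psi_x}$ for the final state of the algorithm before measurement when the oracle is $O_x=\id-2\ket{x}\bra{x}$, and $\ket{\psi}$ for the state when the oracle is replaced by $\id$. The standard BBBV telescoping together with Cauchy--Schwarz gives $\sum_{x=1}^{N}\|\ket{\psi_x}-\ket{\psi}\|^2\le 4Q^2$, so $\mathbb{E}_x\|\ket{\psi_x}-\ket{\psi}\|\le 2Q/\sqrt{N}$ by Jensen.

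Next I would bound the total variation distance between the modified-Born measurement distributions $D_x$ and $D_\emptyset$ produced by these two states on the $M$-dimensional computational basis. Writing $p_{x,y}=|\langle y|\psi_x\rangle|^2$ and Taylor-expanding $|\alpha|^{2+\delta}=|\alpha|^2\bigl(1+(\delta/2)\ln|\alpha|^2+O(\delta^2)\bigr)$ before normalizing---exactly as in Theorem~\ref{signalingimpliesborn}---gives
\begin{equation*}
D_x(y) \;=\; p_{x,y} \;+\; \tfrac{\delta}{2}\,p_{x,y}\bigl(\ln p_{x,y}+H(p_x)\bigr) \;+\; O(\delta^2).
\end{equation*}
Applying the triangle inequality term by term, using the entropy bound $\sum_y p_y|\ln p_y|=H(p_x)\le\log M$ to control every correction, and invoking the pure-state inequality $\text{TV}(p_x,p_\emptyset)\le\|\ket{\psi_x}-\ket{\psi}\|$ (which follows from $\text{TV}\le$ trace distance $\le\|\cdot\|$) yields the key pointwise estimate
\begin{equation*}
\text{TV}(D_x,D_\emptyset) \;\le\; \|\ket{\psi_x}-\ket{\psi}\| \;+\; |\delta|\log M \;+\; O(\delta^2).
\end{equation*}

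Finally, let $S_x\subseteq\{1,\dots,M\}$ be the set of measurement outcomes the algorithm interprets as the answer~$x$. If the algorithm succeeds with probability at least $p$ on every input then $D_x(S_x)\ge p$, and disjointness of the $S_x$ gives $\mathbb{E}_x D_\emptyset(S_x)\le 1/N$. Because $\text{TV}(D_x,D_\emptyset)\ge D_x(S_x)-D_\emptyset(S_x)$, averaging the displayed estimate over $x$ and combining with the hybrid bound above yields
\begin{equation*}
p - \tfrac{1}{N} \;\le\; \tfrac{2Q}{\sqrt{N}} \;+\; |\delta|\log M \;+\; O(\delta^2);
\end{equation*}
the standard success threshold $p=2/3$ (with the trivial $N\ge 2$) makes the left-hand side at least $\tfrac{2}{3}-\tfrac{1}{2}=\tfrac{1}{6}$, giving the advertised inequality. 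The main obstacle is the TV estimate: one must carefully collect every $O(\delta)$ contribution produced by the Taylor expansion \emph{and} by the normalization denominator, and verify that each can be absorbed into the entropy bound $\log M$ with the sign required for a one-sided inequality. The BBBV step is textbook, and the final averaging is routine bookkeeping once the TV estimate is in hand.
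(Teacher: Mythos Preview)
Your proposal is correct and follows essentially the same route as the paper: both combine the BBBV hybrid bound $\sum_x\|\ket{\psi_x}-\ket{\psi}\|^2\le 4Q^2$ on the pre-measurement states with a first-order Taylor expansion of $|\alpha|^{2+\delta}$, absorbing every $O(\delta)$ correction (from both the numerator and the normalization) into the entropy bound $H\le\log M$. The only difference is cosmetic---the paper asserts $\mathrm{TV}(D_0,D_x)\ge 1/6$ for each $x$ and then averages, whereas your disjoint-output-set argument $\mathbb{E}_x D_\emptyset(S_x)\le 1/N$ extracts the same constant more directly; the Taylor/entropy bookkeeping and the Cauchy--Schwarz/Jensen step are otherwise identical.
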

\begin{proof}
Suppose that such an algorithm exists. It must consist of a series of
unitaries and oracle calls followed by a measurement in the
computational basis.

 Let $\ket{\psi^0} = \sum_y \alpha^0_y \ket{y}$ be the state of the
 algorithm just before the final measurement when there is no marked
 item, and let $\ket{\psi^x} = \sum_y \alpha^x_y \ket{y}$ be the state
 if there is a marked item. Let $D_0$ be the distribution on $y$
 obtained by measuring $\ket{\psi^0}$ in the computational basis, and
 $D_x$ be the distribution obtained by measuring $\ket{\psi^x}$. We
 know that $\ket{\psi^0}$ and $\ket{\psi^x}$ must be distinguishable
 with 2/3 probability for every $x$. Hence we must have that the total
 variation distance between $D_0$ and $D_x$ must be at least $1/6$ for
 every $x$ (otherwise one could not decide the problem with bias
 1/6). This implies there must exist some event $S_x$ for which 
\begin{equation}
\frac{1}{6} \leq \sum_{y\in S_x} D_0(y)-D_1(y)
\end{equation}
Assume $\delta>0$; an analogous proof holds for $\delta<0$. Plugging
in the expressions for $D_0$ and $D_1$ and averaging over $x$ we
obtain 
\begin{align}
\frac{1}{6} &\leq \frac{1}{N} \sum_x \sum_{y\in S_x} \frac{|\alpha^0_y|^{2+\delta}}{\sum_{y
} |\alpha^0_{y'}|^{2+\delta}}- \frac{|\alpha^x_y|^{2+\delta}}{\sum_{y
} |\alpha^x_{y'}|^{2+\delta}}\\
&\leq  \frac{1}{N} \sum_x \sum_{y\in S_x} M^{\delta/2}|\alpha^0_y|^{2+\delta}- |\alpha^x_y|^{2+\delta} \label{eq:normchange}\\
&= \frac{1}{N} \sum_x \sum_{y\in S_x} \left(1+\frac{\delta}{2}\log(M)\right)|\alpha^0_y|^2\left(1+\delta \log|\alpha^0_y|\right) - |\alpha^x_y|^2\left(1+\delta\log|\alpha^x_y|\right) + O\left(\delta^2\right) \label{eq:deltataylor} \\
&= \frac{1}{N} \sum_x \sum_{y\in S_x} \left(|\alpha^0_y|^2 - |\alpha^x_y|^2\right)+  \frac{\delta}{2}  \log(M) |\alpha^0_y|^2 +  \frac{\delta}{2} \left(|\alpha^0_y|^2\log|\alpha^0_y|^2 - |\alpha^x_y|^2\log|\alpha^x_y|^2\right) + O\left(\delta^2\right) \\
&\leq \delta \log(M) + O(\delta^2) +\frac{1}{N} \sum_x \sum_{y\in S_x}
\left(|\alpha^0_y|^2 - |\alpha^x_y|^2\right)   \label{eq:simplify}.
\end{align}
In line (\ref{eq:normchange}) we used the fact that $M^{-\delta/2}
\leq \sum_{y} |\alpha_y|^{2+\delta} \leq 1$ for any state $\alpha_y$
normalized in the $\ell_2$ norm, in line (\ref{eq:deltataylor}) we
Taylor expanded to first order in $\delta$, and in line
(\ref{eq:simplify}) we used the fact that the sum in the second term
is upper bounded by one and the sum on third term is a difference of
entropies of (subnormalized) probability distributions which is at
most $\log(M)$. 

We next consider the final term
\begin{equation}
R \equiv \frac{1}{N} \sum_x \sum_{y \in S_x} \left( |\alpha_y^0|^2 -
|\alpha_y^x|^2 \right).
\end{equation}
Let $\hat{S}_x$ be the observable
\begin{equation}
\hat{S}_x = \sum_{y \in S_x} \ket{y} \bra{y}.
\end{equation}
Then
\begin{eqnarray}
R & = & \frac{1}{N} \sum_x \left[ \bra{\psi^0} \hat{S}_x \ket{\psi^0}
  - \bra{\psi^x} \hat{S}_x \ket{\psi^x} \right] \\
& = & \frac{1}{N} \sum_x \left[ \left( \bra{\psi^0} - \bra{\psi^x}
  \right) \hat{S}_x \ket{\psi^0} + \bra{\psi^x} \hat{S}_x \left(
  \ket{\psi^0} - \ket{\psi^x} \right) \right] \\
& \leq & \frac{2}{N} \sum_x \left\| \ket{\psi^0} - \ket{\psi^x} \right\|,
\end{eqnarray}
where the last inequality uses the fact that $\| \hat{S}_x \| =
1$. Next we note that $\sum_x \left\| \ket{\psi^0} - \ket{\psi^x}
\right\|$ is the $\ell_1$ norm of the $N$-dimensional vector whose
$x^{\mathrm{th}}$ component is  
$\left\| \ket{\psi^0} - \ket{\psi^x}\right\|$. For any $N$-dimensional
vector $\vec{v}$, $\|v\|_1 \leq \sqrt{N} \| \vec{v} \|_2$. Thus,
\begin{equation}
\label{rfin}
R \leq \frac{2}{\sqrt{N}} \sqrt{ \sum_x \left\| \ket{\psi^0} - \ket{\psi^x}
  \right\|^2 }.
\end{equation}
As shown in \cite{BBBV},
a unitary search algorithm using $Q$ oracle queries yields
\begin{equation}
\label{maxdiff}
\sum_x \left\| \ket{\psi^0} - \ket{\psi^x} \right\|^2 \leq 4 Q^2.
\end{equation}
Together, \eq{rfin} and \eq{maxdiff} imply
\begin{equation}
\label{rineq}
R \leq \frac{2Q}{\sqrt{N}}.
\end{equation}
Now, \eq{rineq} bounds the last term in \eq{eq:simplify} yielding our
final result.
\begin{equation}
\frac{1}{6} \leq \delta \log(M) + O(\delta^2) + \frac{2Q}{\sqrt{N}}.
\end{equation}
\end{proof}

The following Corollary follows immediately from Theorem
\ref{searchimpliesdelta} and Theorem \ref{bornimpliesFTL}.

\begin{cor} \label{cor:groverconstant} Suppose that the Born rule is
  modified such that $f(\alpha)=|\alpha|^{2+\delta}$, and one can
  search a list of $N=2^n$ items using $m$ qubits and $Q$
  queries. Then to first order in $\delta$, we have 
\[
|\delta| \geq \frac{1}{m}\left(\frac{1}{6} - \frac{2Q}{\sqrt{N}}\right).
\]
  In particular, if one can search an $N$ element list with $Q\leq
  \sqrt{N}/24$ queries on a state of $m$ qubits, then $|\delta| \geq
  \frac{1}{12m}$, and hence by Theorem \ref{bornimpliesFTL} one can
  send superluminal signals with probability 2/3 using $O(m)$
  qubits. \label{searchimpliessignaling}
\end{cor}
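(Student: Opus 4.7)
The plan is to derive this as a direct consequence of the two theorems immediately preceding it: Theorem \ref{searchimpliesdelta} (search implies large $\delta$) and Theorem \ref{bornimpliesFTL} (large $\delta$ implies superluminal signaling with few qubits). I expect no new ideas are needed; the corollary is just a convenient specialization.

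First I would instantiate Theorem \ref{searchimpliesdelta} to the setting of the corollary, where the Hilbert space has dimension $M = 2^m$, so that $\log M = m$. The inequality
\[
\frac{1}{6} \leq \frac{2Q}{\sqrt{N}} + |\delta|\log(M) + O(\delta^2)
\]
then rearranges, to first order in $\delta$, to
\[
|\delta| \geq \frac{1}{m}\left(\frac{1}{6} - \frac{2Q}{\sqrt{N}}\right),
\]
which is the quantitative bound claimed. The only subtlety here is that the $O(\delta^2)$ term is absorbed by restricting to the leading-order expansion in $\delta$, consistent with the rest of the Born-rule analysis in Appendix B.

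Next I would specialize to $Q \leq \sqrt{N}/24$. In this regime $\frac{2Q}{\sqrt{N}} \leq \tfrac{1}{12}$, so the bound becomes
\[
|\delta| \geq \frac{1}{m}\left(\frac{1}{6} - \frac{1}{12}\right) = \frac{1}{12m}.
\]
Finally I would invoke Theorem \ref{bornimpliesFTL}: a Born-rule deviation of size $|\delta|$ enables superluminal transmission of a bit using $O(n/|\delta|)$ qubits with success probability $1-2^{-n}$. Taking $n$ to be a constant chosen so that $1-2^{-n} \geq 2/3$, the qubit cost is $O(1/|\delta|) = O(m)$, yielding the claimed signaling protocol. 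The main potential obstacle is purely bookkeeping, namely keeping the $O(\delta^2)$ corrections from Theorem \ref{searchimpliesdelta} under control and confirming that the constants from Theorem \ref{bornimpliesFTL} compose cleanly; neither is substantive, so the corollary should follow immediately by concatenating the two theorems.
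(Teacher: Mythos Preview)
Your proposal is correct and matches the paper's approach exactly: the paper simply states that the corollary follows immediately from Theorem~\ref{searchimpliesdelta} and Theorem~\ref{bornimpliesFTL}, and your write-up supplies precisely those routine details (setting $M=2^m$, rearranging, specializing $Q\le\sqrt{N}/24$, and invoking the signaling theorem with constant $n$).
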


In contrast, Grover's algorithm uses $\frac{\pi}{4} \sqrt{N}$ queries
to solve search, which is optimal \cite{Zalka}. So Corollary
\ref{cor:groverconstant} shows that if one can achieve even a modest
factor of ($6 \pi \approx 19$) speedup over Grover search using $m$
qubits, then one can send superluminal signals using $O(m)$
qubits. 

\section*{Appendix C: Cloning of Quantum States}
\stepcounter{section}
\label{cloning}

One way to modify quantum mechanics is to allow perfect copying of
quantum information, or ``cloning''. As a minimal example, we will
here introduce the ability to do perfect single-qubit cloning. As with
nonlinear dynamics, care must be taken to formulate a version of
quantum cloning that is actually well defined. It is clear that
perfect single qubit cloning should take $\ket{\psi} \mapsto
\ket{\psi} \otimes \ket{\psi}$ for any single-qubit pure state. The
nontrivial task is to define the behavior of the cloner on qubits that
are entangled. It is tempting to simply define cloning in terms of the
Schmidt decomposition of the entangled state. That is, applying the
cloner to qubit $B$ induces the map $\sum_i \lambda_i \ket{i_A}
\ket{i_B} \mapsto \sum_i \lambda_i \ket{i_A} \ket{i_B} \ket{i_B}$. However, this
prescription is ill-defined due to the non-uniqueness of Schmidt
decompositions. The two decompositions of the EPR pair given in
\eq{11} and \eq{++} provide an example of the inconsistency of the
above definition.

Instead, we define our single-qubit cloner as follows.
\begin{definition}
\label{clonedef}
Let $\rho_{AB}$ be a state on a bipartite system $AB$. Let
$\rho_B$ be the reduced density matrix of $B$. Then applying the
cloner to $B$ yields
\[
\rho_{AB} \mapsto \rho_{AB} \otimes \rho_B.
\]
\end{definition}
In particular, for pure input, we have $\ket{\psi_{AB}}\bra{\psi_{AB}}
\mapsto \ket{\psi_{AB}}\bra{\psi_{AB}}\otimes \rho_B$. Thus, this
version of cloning maps pure states to mixed states in
general. Furthermore, the clones are asymmetric. The cloner takes one
qubit as input and produces two qubits as output. The two output
qubits have identical reduced density matrices. However, one of the
output qubits retains all the entanglement that the input qubit had
with other systems, whereas the other qubit is unentangled with
anything else. By monogamy of entanglement it is impossible for both
outputs to retain the entanglement that the input qubit had.

It is worth noting that the addition of nonlinear dynamics, and
cloning in particular, breaks the equivalence between density matrices
and probabilistic ensembles of pure states. Here, we take density
matrices as the fundamental objects in terms of which our generalized
quantum mechanics is defined. 

In analyzing a model of computation involving cloning, we will treat
the cloning operation as an additional gate, with the same ``cost'' as
any other. In circuit diagrams, we denote the cloning gate as follows.
\[
\includegraphics[width=0.6in]{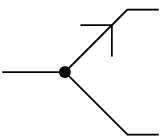}
\]
This notation reflects the asymmetric nature of our cloning gate; the
arrow indicates the output qubit that retains the entanglement of the
input qubit.

\subsection{Grover Search using Quantum Cloning}

Cloning is a nonlinear map on quantum states. As argued by Abrams and
Lloyd \cite{Abrams_Lloyd}, one can solve Grover search on a database
of size $N$ using $O(1)$ oracle queries and $O(\log N)$ applications
of $S$, for any nonlinear map $S$ from pure states to pure states,
except perhaps some pathological cases. Here, with theorem
\ref{magtheorem}, we have formalized this further, showing that this
holds as long as $S$ is differentiable. However, the cloning gate
considered here maps pure states to mixed states. Therefore, this gate
requires a separate analysis. We cannot simply invoke theorem
\ref{magtheorem}. Instead we specifically analyze the cloning gate
given above and arrive at the following result.

\begin{theorem}
\label{clonesearch}
Suppose we have access to a standard Grover bit-flip oracle, which
acts as $U_f \ket{y} \ket{z} = \ket{y} \ket{z \oplus f(y)}$ where
$f:\{0,1\}^n \to \{0,1\}$. Using one query to this oracle, followed by
a circuit using $\mathrm{poly}(n)$ conventional quantum gates and
$O(n)$ of the single-qubit cloning gates described in definition
\ref{clonedef}, one can distinguish between the cases that
$|f^{-1}(1)| = 0$ and $|f^{-1}(1)| = 1$ with high probability.
\end{theorem}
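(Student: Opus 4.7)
My plan is to combine a single-query Grover-style state preparation with iterated applications of the cloning gadget of Figure \ref{fig:gadget} so as to exponentially amplify an $O(1/N)$ difference between two reduced density matrices into a distinguishable measurement outcome, where $N = 2^n$. First I will spend the one allowed oracle query to prepare the state: put the $n$-qubit query register in the uniform superposition and an ancilla in $\ket{-}$, call $U_f$ once (which by phase kickback multiplies each computational basis state $\ket{y}$ by $(-1)^{f(y)}$), and then apply $H^{\otimes n}$ to the query register. When $|f^{-1}(1)| = 0$ the register ends in $\ket{0^n}$; when $|f^{-1}(1)| = 1$ with unique marked element $x^*$ it ends in $(1 - 2/N)\ket{0^n} - \tfrac{2}{N}\sum_{z \neq 0^n}(-1)^{x^* \cdot z}\ket{z}$. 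A short direct computation using orthogonality of Walsh-Hadamard characters shows the first-qubit reduced density matrix is $\ket{0}\bra{0}$ in the no-solution case and the diagonal mixed state $\mathrm{diag}(1 - 2/N,\, 2/N)$ in the one-solution case, with Bloch vectors $(0,0,1)$ and $(0,0,1-4/N)$ respectively.

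Next I will analyze the gadget from Figure \ref{fig:gadget} acting on an unentangled qubit with state $\rho$. By Definition \ref{clonedef} the cloning step yields $\rho \otimes \rho$; after applying a CNOT between the two copies and tracing out the control qubit, a short Bloch-vector calculation (starting from $\rho = \tfrac{1}{2}(I + x\sigma_x + y\sigma_y + z\sigma_z)$ and computing the resulting $2 \times 2$ density matrix) shows the gadget implements the nonlinear map $(x,y,z) \mapsto (x,\, yz,\, z^{2})$. Crucially, $z$ gets squared. Iterating the gadget therefore drives $z \mapsto z^{2^k}$ after $k$ applications, so feeding in $\rho_1$ and running $k = n + O(1)$ iterations keeps the no-solution Bloch $z$-component exactly at $1$ while driving the one-solution value $z^{2^k} = (1 - 4/N)^{2^k}$ below $e^{-\Omega(1)}$. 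The final single-qubit state is then a constant trace distance from $\ket{0}\bra{0}$ in the one-solution case and equal to $\ket{0}\bra{0}$ in the no-solution case, so a computational-basis measurement outputs $1$ with constant probability exactly when a marked item exists.

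Finally, to boost to arbitrarily high success probability, I will run $O(1)$ independent gadget chains in parallel. Each chain is seeded by a fresh application of the cloning gate to the first qubit of the post-Hadamard state, which by Definition \ref{clonedef} produces an unentangled copy of $\rho_1$; each chain then uses $O(n)$ further cloning gates and $O(n)$ CNOTs. The procedure declares ``one solution'' if any chain outputs $1$. The total resources are one oracle query, $O(n)$ cloning gates, and $\mathrm{poly}(n)$ conventional quantum gates, as required. The main point I expect to need to justify carefully is that the chains really are statistically independent: one must verify that every intermediate single-qubit state produced by the gadget is unentangled from every other qubit still in play, so that subsequent cloning steps act as the product map $\sigma \mapsto \sigma \otimes \sigma$ under Definition \ref{clonedef}. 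This follows because the qubit discarded inside each gadget absorbs the entanglement generated by the CNOT, and the initial clones of $\rho_1$ are unentangled from the rest of the register by construction; this structural check is the only place where the idiosyncrasies of Definition \ref{clonedef} could in principle cause the amplification argument to fail.
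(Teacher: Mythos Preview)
Your proof is correct, but it differs from the paper's in two linked ways. First, the paper prepares the oracle input as the \emph{maximally mixed} state on the query register together with a $\ket{0}$ ancilla, applies $U_f$, and then a single Hadamard on the ancilla; the ancilla then emerges directly (and unentangled) in $\rho_+ = \ket{+}\bra{+}$ when there is no solution and in $\rho_\epsilon = (1-\epsilon)\ket{+}\bra{+} + \epsilon\ket{-}\bra{-}$ with $\epsilon = 2^{-n}$ when there is one. Second, the paper's gadget discards the \emph{target} of the CNOT, not the control: the explicit formula for $\mathcal{M}$ in the proof preserves the diagonal of $\rho$, which in Bloch coordinates is the map $(x,y,z)\mapsto(x^2,xy,z)$, not your $(x,y,z)\mapsto(x,yz,z^2)$. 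The paper then works near the repulsive fixed point $\ket{+}$ and shows $\mathcal{M}(\rho_\epsilon)=\rho_{2\epsilon+O(\epsilon^2)}$.

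Your variant is the Hadamard-conjugate of this: by using phase kickback and a full $H^{\otimes n}$ you land near $\ket{0}$ rather than $\ket{+}$, and by tracing out the control you obtain the map that squares $z$ instead of $x$. Both routes are valid; the paper's buys you that the post-query qubit is already unentangled (no reduced-state computation, no need to invoke Definition~\ref{clonedef} on an entangled qubit to seed the chain), while yours buys you the exact closed form $z\mapsto z^{2^k}$ under iteration and a state preparation closer to textbook Grover. One small caveat: what you call ``the gadget from Figure~\ref{fig:gadget}'' is not literally that gadget, since the paper's $\mathcal{M}$ keeps the target; you should say you are using the variant that discards the control.
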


\begin{proof}
For the design of nonlinear Grover search algorithms it is helpful to
have a nonlinear map from a fixed state space to itself. To this end,
we consider circuits of the following form, which implement nonlinear
maps from the space of possible density matrices of a qubit to itself.
\[
\includegraphics[width=2in]{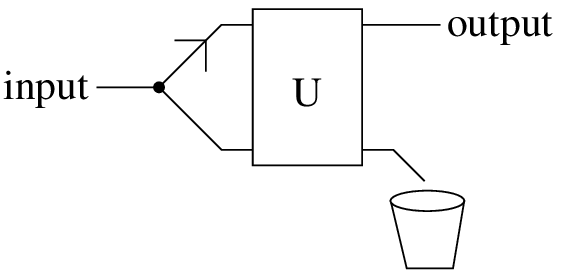}
\]
Here, one clones the input qubit, performs some unitary $U$ between
the two resulting copies, and lastly discards one of the qubits.

With a small amount of trial and error one can find a choice of $U$
which enables single-query Grover search using an analogue of the
Abrams-Lloyd algorithm. Specifically, we choose $U$ to be the
controlled-not gate. That is, let
\[
\includegraphics[width=2.2in]{cnotmap.eps}.
\]
$\mathcal{M}$ is a quadratic map on density matrices. By direct
calculation
\begin{equation}
\mathcal{M}\left( \left[ \begin{array}{cc}
r_{00} & r_{01} \\
r_{10} & r_{11} \end{array} \right] \right) 
= \left[ \begin{array}{cc}
r_{00}^2 + r_{00} r_{11} & r_{01}^2 + r_{01} r_{10} \\
r_{10}^2 + r_{10} r_{01} & r_{11}^2 + r_{11} r_{00} \end{array} \right].
\end{equation}

One can find the fixed points of $\mathcal{M}$ by solving the system
of four quadratic equations implied by $\mathcal{M}(\rho) = \rho$. The
solutions are as follows.
\begin{eqnarray}
r_{10} & = & 1 - r_{01}\textrm{,}\quad r_{11} = 1-r_{00} \label{sol1} \\
r_{00} & = & 0\textrm{,}\quad r_{10}=1-r_{01}\textrm{,}\quad r_{11} = 0 \label{sol2}\\
r_{01} & = & 1\textrm{,}\quad r_{10} = 0\textrm{,}\quad r_{11} = 1-r_{00} \label{sol3}\\
r_{00} & = & r_{01} = r_{10} = r_{11} = 0 \label{sol4}
\end{eqnarray}
The solutions \eq{sol2} and \eq{sol4} are traceless and therefore
unphysical. Solution \eq{sol3} is an arbitrary mixture of $\ket{0}$
and $\ket{1}$. That is,
\begin{equation}
\rho_r = \left[ \begin{array}{cc} r & 0 \\ 0 & 1-r \end{array} \right]
\quad \textrm{is a fixed point for all $r \in [0,1]$.}
\end{equation}
As a matrix, the solution \eq{sol1} is
\begin{equation}
\rho_{a,b} = \left[ \begin{array}{cc} a & b \\ 1-b & 1-a \end{array} \right].
\end{equation}
This is only Hermitian if $b = (1-b)^*$, which implies that $b =
\frac{1}{2} + \alpha i$ for some $\alpha \in \mathbb{R}$. However, if
$\alpha \neq 0$ then $\rho_{a,b}$ fails to be positive semidefinite,
which is unphysical. Thus, $b = \frac{1}{2}$. The eigenvalues of
$\rho_{a,1/2}$ are
\begin{equation}
\frac{1}{2} \pm \sqrt{ \frac{1 + 2a(a-1)}{2}}.
\end{equation}
Thus, unless $a = \frac{1}{2}$, the largest eigenvalue of
$\rho_{a,1/2}$ exceeds one, which is unphysical. So, the only physical
fixed point other than $\rho_r$ is
\begin{equation}
\rho_+ = \left[ \begin{array}{cc} \frac{1}{2} & \frac{1}{2} \vspace{4pt}\\
    \frac{1}{2} & \frac{1}{2} \end{array} \right] = \ket{+} \bra{+}.
\end{equation}

Numerically, one finds that $\rho_r$ is an attractive fixed point and
$\rho_+$ is a repulsive fixed point. Let
\begin{equation}
\rho_\epsilon = \left[ \begin{array}{cc} \frac{1}{2} & \frac{1}{2} -
    \epsilon \\ \frac{1}{2} - \epsilon & \frac{1}{2} \end{array}
\right] = (1-\epsilon) \ket{+}\bra{+} + \epsilon \ket{-} \bra{-}.
\end{equation}
Then,
\begin{eqnarray}
\mathcal{M}(\rho_{\epsilon}) = \rho_{2 \epsilon + O(\epsilon^2)}.
\end{eqnarray}
Consequently, $\mathcal{M}^r(\rho_{\epsilon})$ is easily
distinguishable from $\mathcal{M}^r(\rho_+) = \rho_+$ after $r = O(
\log(1/\epsilon) )$ iterations of $\mathcal{M}$.

Let $U_f$ be the standard Grover bit-flip oracle, which acts as $U_f
\ket{y}\ket{z} = \ket{y} \ket{z \oplus f(y)}$ where $f:\{0,1\}^n \to
\{0,1\}$. 
Now, consider the following circuit.
\[
\Qcircuit @C=1em @R=1em {
\lstick{\frac{\id}{2^n}} & {/} \qw & \multigate{1}{U_x} & \qw      & \qw & \\
\lstick{\ket{0}}         & \qw     & \ghost{U_x}        & \gate{H} & \qw &
}
\]
One sees that the bottom qubit emerges in the state $\rho_+$ if $f$
has no solution and emerges in the state $\rho_{\epsilon}$ with
$\epsilon = \frac{1}{2^n}$ if $f$ has one solution. By making one such
query and then applying the map $\mathcal{M}$ a total of $O(n)$ times
to the resulting state, one obtains single-qubit states in the
no-solution and one-solution cases that are easily distinguished with
high confidence using conventional quantum measurements.
\end{proof}

For simplicity, in theorem \ref{clonesearch}, we have restricted our
attention to search problems which are promised to have exactly one
solution or no solutions and our task is to determine which of these
is the case. Note that \textsf{3SAT} can be reduced to
\textsf{UNIQUESAT} in randomized polynomial time
\cite{ValiantVazirani}.  Hence solving the Grover problem in poly(n)
time when there is either exactly one solution or no solutions
suffices to solve \textsf{NP}-hard problems in randomized polynomial
time.

It is interesting to note that probability distributions also cannot
be cloned. The map $\vec{p} \mapsto \vec{p} \otimes \vec{p}$ on
vectors of probabilities is nonlinear and hence does not correspond to
any realizable stochastic process. Furthermore, one finds by a
construction similar to the above that cloning of classical
probability distributions also formally implies polynomial-time
solution to \textsf{NP}-hard problems via logarithmic-complexity single-query
Grover search. However, nonlinear maps on probabilities do not appear
to be genuinely well-defined. Suppose we have probability $p_1$ of
drawing from distribution $\vec{p}_1$ and probability $p_2$ of drawing
from distribution $\vec{p}_2$. Normally this is equivalent to
drawing from $p_1 \vec{p_1} + p_2 \vec{p_2}$. However, if we apply a
nonlinear map $\mathcal{M}$ then $\mathcal{M}(p_1 \vec{p_1} + p_2
\vec{p_2})$ is in general not equal to $p_1 \mathcal{M}(p_1) + p_2
\mathcal{M}(\vec{p_2})$. It is not clear that a well-defined
self-consistent principle can be devised for resolving such
ambiguities.

\subsection{Superluminal Signaling using Quantum Cloning}

Suppose Alice and Bob share an EPR pair $\frac{1}{\sqrt{2}} \left(
  \ket{00} + \ket{11} \right)$. If Alice wishes to transmit a zero she
does nothing. If she wishes to transmit a one she measures her qubit
in the computational basis. If Alice doesn't measure then Bob's
reduced density matrix is maximally mixed. Hence if he makes several
clones and measures them all in the computational basis he will obtain
a uniformly random string of ones and zeros. If Alice does measure
then Bob's reduced density matrix is either $\ket{0}\bra{0}$ or
$\ket{1} \bra{1}$, with equal probability. If he makes several clones
and measures them all in the computational basis he will get
$000\ldots$ or $111\ldots$, with equal probability. Thus, by making
logarithmically many clones, Bob can achieve polynomial certainty
about the bit that Alice wished to transmit. 

\section*{Appendix D: Postselection}
\stepcounter{section}
\label{postselection}

In \cite{Aaronson_postBQP} it was shown that adding the ability to
postselect a single qubit onto the state $\ket{0}$ to the quantum
circuit model yields a model of computation whose power is equal to
the classical complexity class PP. Furthermore, postselection onto
$\ket{0}$ allows perfect superluminal signaling by postselected quantum
teleportation. Here we consider a more general question: suppose we
have the ability to postselect on some arbitrary but fixed $n$-qubit
state $\ket{\psi}$. Does this still yield efficient means of solving
problems in PP and sending superluminal signals? It is clear that one
can use postselection onto $\ket{\psi}$ to simulate postselection onto
$\ket{0}$ given a quantum circuit for a unitary $U$ such that $U
\ket{00\ldots0} = \ket{\psi}$. However, for a generic $n$-qubit state
$\ket{\psi}$, no polynomial-size quantum circuit for this tasks
exists. Nevertheless, in this appendix we show that, for Haar random
(but fixed) $\ket{\psi}$, postselection onto $\ket{\psi}$ can with
high probability be used to simulate postselection onto $\ket{0}$ with
exponential precision.

We first note that the maximally entangled state of $2n$ qubits:
\begin{equation}
\ket{\Phi_{2n}} = \sum_{x \in \{0,1\}^n} \ket{x} \otimes \ket{x}
\end{equation}
can be prepared using $n$ Hadamard gates followed by $n$ CNOT
gates. Postselecting the second tensor factor of $\ket{\Phi_{2n}}$
onto $\ket{\psi}$ yields $\ket{\psi}$ on the first tensor factor. In
this manner, one may extract a copy of $\ket{\psi}$. We assume that
$\ket{\psi}$ is Haar random but fixed. That is, each time one uses the
postselection ``gate,'' one postselects onto the same state
$\ket{\psi}$. Hence, using the above procedure twice yields two copies
of $\ket{\psi}$. Applying $\sigma_x$ to one qubit of one of the copies
of $\ket{\psi}$ yields a state $\ket{\psi'} = \sigma_x \ket{\psi}$. As
shown below, the root-mean-square inner product between $\ket{\psi}$
and $\ket{\psi'}$ is of order $1/\sqrt{2^n}$. That is, they are nearly
orthogonal. Thus, one can simulate postselection onto $\ket{0}$ with
the following circuit.

\[
\Qcircuit @C=1em @R=1em {
                     & \qw     & \ctrl{1}                     & \qw \\
\lstick{\ket{\psi'}} & {/} \qw & \multigate{1}{\mathrm{SWAP}} & \qw \\
\lstick{\ket{\psi}}  & {/} \qw & \ghost{\mathrm{SWAP}}        & \gate{\ket{\psi}}
}
\]

\noindent
Here, the top qubit gets postselected onto $\ket{0}$ with fidelity
$1-O(1/\sqrt{2^n})$, the middle register is discarded, and the bottom
register is postselected onto $\ket{\psi}$, an operation we denote by
$\Qcircuit @C=1em @R=1em { & \gate{\ket{\psi}}  & \qw }$.

Lastly, we prove the claim that the root-mean-square inner
product between a Haar random $n$-qubit state $\ket{\psi}$ and
$\ket{\psi'} = \sigma_x \ket{\psi}$ is of order $1/\sqrt{2^n}$. This
mean-square inner product can be written as
\begin{eqnarray}
\bar{I} & = & \int dU | \bra{0\ldots0} U^\dag \sigma_x U
\ket{0\ldots0}|^2 \\
& = & \sum_{a,b \in \{0,1\}^n} \int dU U^\dag_{0a} U_{\bar{a} 0} U^\dag_{0b}
U_{\bar{b} 0} \label{barb}
\end{eqnarray}
where $\bar{a}$ indicates the result of flipping the first bit of $a$,
 $\bar{b}$ indicates the result of flipping the first bit of
$b$, and $0$ in the subscripts is shorthand for the bit string $0
\ldots 0$. (We arbitrarily choose the $\sigma_x$ to act on the first
qubit.)

Next we recall the following identity regarding integrals on the Haar
measure over $U(N)$. (See \cite{CS06} or appendix D of \cite{Harlow}.)
\begin{equation}
\begin{array}{rcl}
\int dU \  U_{ij} U_{kl} U^\dag_{mn} U^\dag_{op} & = & \frac{1}{N^2-1} 
\left( \delta_{in} \delta_{kp} \delta_{jm} \delta_{lo} + \delta_{ip}
  \delta_{kn} \delta_{jo} \delta_{lm} \right) \\
& & - \frac{1}{N (N^2-1)} \left( \delta_{ij} \delta_{kp} \delta_{jo}
  \delta_{lm} + \delta_{ip} \delta_{kn} \delta_{jm} \delta_{lo}
\right) \label{ident}
\end{array}
\end{equation}
Applying \eq{ident} to \eq{barb} shows that the only nonzero terms
come from $a = \bar{b}$ and consequently
\begin{eqnarray}
\bar{I} & = & \sum_{a \in \{0,1\}^n} \int dU \ U_{\bar{a} 0} U_{a0} U^\dag_{0a}
U^{\dag}_{0 \bar{a}} \\
        & = & \frac{N}{N^2-1} - \frac{1}{N^2-1}.
\end{eqnarray}
Consequently, the RMS inner product for large $N$ is
\begin{equation}
\sqrt{\bar{I}} \simeq \frac{1}{\sqrt{N}}.
\end{equation}
Recalling that $N = 2^n$ completes the argument.

\section*{Appendix E: General Nonlinearities}
\stepcounter{section}
\label{general_nonlinear}

Our discussion of final-state projection models can be thought of as
falling within a larger tradition of studying the
information-theoretic and computational complexity implications of
nonlinear quantum mechanics, as exemplified by 
\cite{Abrams_Lloyd, AaronsonNP, MW13, CY15}. A question within this subject
that has been raised multiple times \cite{Abrams_Lloyd, AaronsonNP} is
whether all nonlinearities necessarily imply that Grover search can be
solved with a single query. In this note we shed some light on this
question. However, note that the setting differs from that of section
\ref{grover} of Appendix B in that (following \cite{Abrams_Lloyd, AaronsonNP}) we 
assume the nonlinear map is the same each time, and we can apply it
polynomially many times. In section \ref{grover} of Appendix B we have included the possibility that black holes (and the nonlinear maps that they generate) are scarce and that they may differ from one another.

We first note that, for dynamics that map normalized pure states to
normalized pure states, the terms nonunitary and nonlinear are
essentially interchangeable. Let $V$ be the manifold of normalized
vectors on a complex Hilbert space $\mathcal{H}$, which could be
finite-dimensional or infinite-dimensional. Let $S:V \to V$ be a
general map, not necessarily linear or even continuous. We'll call $S$
a \emph{unitary map} if it preserves the magnitude of inner
products. That is, $|\langle S \psi|S \phi \rangle| = | \langle
\psi|\phi \rangle |$ for all $\ket{\phi},\ket{\psi} \in
\mathcal{H}$. Wigner's theorem \cite{Wigner} states that all unitary
maps are either unitary linear transformations, or antiunitary
antilinear transformations. (Antiunitary transformations are equivalent
to unitary  transformations followed by complex conjugation of all
amplitudes in some basis.) Extending quantum mechanics by allowing
antiunitary dynamics does not affect computational complexity, as can
be deduced from \cite{Aharonov}. Thus, without loss of generality, we
may ignore antiunitary maps. Hence, within the present context, if a
map is unitary then it is linear. Conversely, by linear algebra, if
map $S$ is linear, and maps $V \to V$, \emph{i.e.} is norm-preserving,
then it is also inner-product preserving, \emph{i.e.} unitary.

A standard version of the Grover problem is, for some function 
$f:\{0,1\}^n \to \{0,1\}$, to decide whether the number of solutions
to $f(y) = 1$ is zero or one, given that one of these is the case. The
search problem of finding a solution is reducible to this decision
problem with logarithmic overhead via binary search. In
\cite{Abrams_Lloyd} Abrams and Lloyd show how to solve the decision
version of Grover search using a single quantum  query to $f$ and
$O(n)$ applications of a single-qubit nonlinear map. This suffices to
solve \textsf{NP} in polynomial time. We now briefly describe their
algorithm. In contrast to section \ref{grover} of Appendix B, it is more convenient
here to assume a bit-flip oracle rather than a phase-flip oracle. That 
is, for $y \in \{0,1\}^n$ and $z \in \{0,1\}$ the oracle $O_f$
acts as
\begin{equation}
O_f \ket{y}\ket{z} = \ket{y} \ket{z \oplus f(y)}.
\end{equation}
Querying the oracle with the state 
$\frac{1}{\sqrt{2^n}} \sum_{y \in \{0,1\}^n} \ket{y} \ket{0}$  yields 
$\frac{1}{\sqrt{2^n}} \sum_{y \in  \{0,1\}^n} \ket{y} \ket{f(y)}$. 
Applying a Hadamard gate to each qubit of the first register and
measuring the first register in the computational basis yields the
outcome $00\ldots0$ with probability at least $\frac{1}{4}$. Given
that this occurs, the post-measurement state of the second register is
\begin{equation}
\ket{\psi_s} = \frac{(2^n-s) \ket{0} + s \ket{1}}{\sqrt{(2^n-s)^2+s^2}},
\end{equation}
where $s$ is the number of solutions, \emph{i.e.} $s =
|f^{-1}(1)|$. Thus, we can solve the Grover search problem by
distinguishing two exponentially-close states, namely $\ket{\psi_0}$
and $\ket{\psi_1}$. For the particular nonlinear map on the manifold
of normalized pure single-qubit states considered in
\cite{Abrams_Lloyd}, a pair of states $\epsilon$-close together
can be separated to constant distance by iterating the map
$O(\log(1/\epsilon))$ times.

We now show that any differentiable nonlinear map from pure states to
pure states on any finite-dimensional Hilbert space can achieve
this. (See theorem \ref{magtheorem}.) Let $V^{(n)}$ be the manifold of
normalized pure states on $\mathbb{C}^n$. Thus, $V^{(n)}$ is a $2n-1$
dimensional real closed compact manifold. For points $a,b$ on
$V^{(n)}$ let $|a-b|$ denote their distance. (Our choice of distance
metric is not important to the argument, but for concreteness, we
could choose the angle between quantum states, that is, $|a-b|=
\cos^{-1} |\langle a | b \rangle|$. That this is a metric is proven in
section 9.2.2 of \cite{Nielsen_Chuang}.)

\begin{theorem} Let $S:V^{(n)} \to V^{(n)}$ be a differentiable map,
  that is, a self-diffeomorphism of $V^{(n)}$. Let $r = \max_{a,b \in
    V^{(n)}} \frac{|S(a)-S(b)|}{|a-b|}$. Then there exists some
  sufficiently short geodesic $l$ in $V^{(n)}$ such that for all $x,y
  \in l$, $\frac{|S(x)-S(y)|}{|x-y|} \geq r$.
\label{magtheorem}
\end{theorem}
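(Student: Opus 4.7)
The plan is a triangle-inequality argument applied to the geodesic joining the pair of points $(a^*, b^*)$ that realizes the maximum stretch $r$. First, I would observe that by compactness of $V^{(n)}$ and continuity of the ratio $|S(a)-S(b)|/|a-b|$ (which extends continuously to the diagonal via the operator norm $\|DS_p\|_{\mathrm{op}}$ of the derivative, since $S$ is a diffeomorphism), the supremum $r$ is attained---either at a distinct pair $(a^*, b^*)$, or as a diagonal limit at some point $p^*$ whose derivative $\|DS_{p^*}\|_{\mathrm{op}}$ realizes $r$ in some tangent direction $v^*$. Under the Fubini--Study distance $|a-b| = \cos^{-1}|\langle a|b\rangle|$, minimizing geodesics are well-defined great-circle arcs, which serve as the candidate segment $l$.

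In the distinct-pair case, I would take $l$ to be the minimizing geodesic from $a^*$ to $b^*$ and let $x, y \in l$ be any pair ordered between them. Applying the triangle inequality to $|S(a^*)-S(b^*)|$ decomposed through $S(x)$ and $S(y)$, and bounding each sub-segment by $r$ times arc length via the definition of $r$, the geodesic additivity $|a^*-b^*| = |a^*-x| + |x-y| + |y-b^*|$ forces $|S(x)-S(y)| \geq r|x-y|$ after cancellation. Since the reverse bound is immediate from the definition of $r$, the stretch along $l$ is in fact exactly $r$, and any sufficiently short sub-segment serves as the claimed geodesic.

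In the diagonal-limit case, I would instead take $l$ to be a short geodesic through $p^*$ with initial tangent $v^*$. A first-order Taylor expansion of $S$ along $l$ gives $|S(x)-S(y)|/|x-y| = r + O(\mathrm{length}(l))$ uniformly for $x, y \in l$, so choosing $l$ short enough brings the ratio arbitrarily close to $r$ for every pair. This is the intended interpretation in that regime, and it suffices for the downstream Abrams--Lloyd-style application, which only requires constant-factor agreement with $r$ in order to amplify exponentially close pairs of states in logarithmically many iterations.

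The main obstacle is the diagonal-limit case: the clean triangle-inequality trick breaks because no distinct pair realizes $r$, so one must upgrade a pointwise derivative bound at $p^*$ to a uniform lower bound on the stretch across \emph{all} pairs of a short geodesic. This requires controlling the second-order Taylor remainder uniformly along $l$, which follows from smoothness of $S$ together with compactness of $V^{(n)}$, but it introduces a gap between the literal ``$\geq r$'' in the statement and the quantitative ``$\geq r - \epsilon$'' that one can genuinely prove for any prescribed tolerance $\epsilon > 0$. The rest of the argument is routine once this uniformity is in hand.
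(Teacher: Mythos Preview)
Your proposal is correct and, in the distinct-pair case, is actually cleaner than the paper's argument. The paper also starts from a maximizing pair $(a^*,b^*)$ and the geodesic $g$ joining them, but then argues via \emph{arc length}: since the endpoints of $S(g)$ are $r|a^*-b^*|$ apart, the curve $S(g)$ has length at least $r|g|$, so by a mean-value/pigeonhole step the arc-length derivative of $S$ along $g$ must be $\geq r$ on some sufficiently short subsegment $l$. Your triangle-inequality argument bypasses arc length entirely and shows directly that every pair on the \emph{entire} geodesic $g$ has distance stretch exactly $r$; this is both stronger (no need to pass to a short subsegment) and avoids the implicit identification of arc-length stretching with distance stretching that the paper's argument relies on.

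You are also more careful than the paper about attainment of the maximum. The paper simply writes ``choose two points that maximize the ratio'' and never addresses the possibility that the supremum is realized only as $\|DS_{p^*}\|_{\mathrm{op}}$ at a diagonal point. Your observation that in this regime one can only guarantee $\geq r-\epsilon$ on a short geodesic through $p^*$ is accurate, and your remark that this is all the downstream Abrams--Lloyd application actually needs is exactly right. So your treatment is at least as rigorous as the paper's, while making the residual gap in the literal statement explicit rather than leaving it implicit.
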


\begin{proof}
Choose two points $x,y$ on $V^{(n)}$ that maximize the ratio $r =
\frac{|S(x) - S(y)|}{|x-y|}$. By assumption, $S$ is not unitary, so
not all distances are preserved. Because $S$ is a map from $V^{(n)}$
to another manifold of equal volume (namely $V^{(n)}$ itself) it
cannot be that all distances are decreased. Thus, this maximum ratio
must be larger than one. The extent that this ratio exceeds one
quantifies the deviation from unitarity.

Now, consider the geodesic $g$ on $V^{(n)}$ from $x$ to $y$. Because
it is a geodesic, $g$ has length $|x-y|$. Now consider the image of
$g$ under the map $S$. Because $S$ is a continuous map, $S(g)$ will
also be a line segment. By the construction, the endpoints of $S(g)$
are distance $r|x-y|$ apart. Therefore, the length of $S(g)$, which we
denote $|S(g)|$, satisfies $|S(g)| \geq r|x-y|$, with equality
if $S(g)$ happens to also be a geodesic. Thus, $S$ induces a
diffeomorphism $S_g$ from the line segment $g$ to the line segment
$S(g)$, where $|S(g)|/|g| \geq r$. Because $S_g$ is a diffeomorphism
it follows that on any sufficiently small subsegment of $g$ it acts by
linearly magnifying or shrinking the subsegment and translating to
some location on $S(g)$. Because $|S(g)|/|g| \geq r$ it follows that
there exists some subsegment $l$ such that this linear magnification is
by a factor of at least $r$. (There could be some subsegments that
grow less than this or even shrink, but if so, others have to make up
for it by growing by a factor of more $|S(g)|/|g|$.)
\end{proof}

We now argue that the existence of $l$ suffices to ensure success for
the Abrams-Lloyd algorithm. Let $f$ denote the ``magnification
factor'' that $S$ induces on $l$. According to theorem
\ref{magtheorem}, $f \geq r$. We are interested in asymptotic
complexity, so the distance $\epsilon$ between $\ket{\psi_0}$ and
$\ket{\psi_1}$ is asymptotically small. Therefore, we assume $\epsilon$
is smaller than the length of $l$. So, we can append ancilla qubits
and apply a unitary transformation such that the resulting isometry
maps $\ket{\psi_0}$ and $\ket{\psi_1}$ to two points
$\ket{\phi_0^{(0)}}$ and $\ket{\phi_1^{(0)}}$ that lie on $l$. We then
apply $S$, resulting in the states $\ket{\phi_0^{(1)}}$ or
$\ket{\phi_1^{(1)}}$, which have distance $f \epsilon$. If $f
\epsilon$ is larger than the length $l$ then we terminate. Because we
have a fixed nonunitary map, the distance between our states is now a
constant (independent of $\epsilon$ and hence of the size of the
search space). If $f \epsilon$ is smaller than the length of $l$, then
we apply a unitary map that takes $\ket{\phi_0^{(1)}}$ and
$\ket{\phi_1^{(1)}}$ back onto $l$ and apply $S$ again. We then have
states $\ket{\phi_0^{(2)}}$ and $\ket{\phi_1^{(2)}}$ separated by
distance $f^2 \epsilon$. We then iterate this process until we exceed
the size $l$, which  separates the states to a constant distance and
uses $\log_f(1/\epsilon)$ of the nonunitary operations. States with
constant separation can be distinguished within standard quantum
mechanics by preparing a constant number of copies and collecting
statistics on the outcomes of ordinary projective measurements.

\section*{Appendix F: A Cautionary Note on Nonlinear Quantum Mechanics}
\stepcounter{section}
\label{cautionary}

The Horowitz-Maldecena final-state projection model, cloning of
quantum states, and the Gross-Pitaevsky equation (if interpreted as a
quantum wave equation) all involve nonlinear dynamics of the
wavefunction. In such cases, one must be very careful to ensure that
subsystem structure, which is captured by tensor product structure in
conventional quantum mechanics, is well-defined. Indeed, subsystem
structure is lost by introducing generic nonlinearities, and in
particular by the nonlinearity of the Gross-Pitaevsky equation. This
makes the question about superluminal signaling in the Gross-Pitaevsky
model ill-posed. The Horowitz-Maldecena model does have a natural
notion of subsystem structure, which is one of the features that makes
it appealing. Furthermore, the model of cloning that we formulate in Appendix D preserves subsystem structure by virtue of being
phrased in terms of reduced density matrices.

More formally, let $V$ be the manifold of normalized vectors in the
Hilbert space $\mathbb{C}^{d}$. We will model nonlinear quantum
dynamics by some map $S:V \to V$ which may not be a linear map on
$\mathbb{C}^d$. In general, specifying a map $S$ on $V$ does not
uniquely determine the action of $S$ when applied to a subsystem of a
larger Hilbert space. For example, consider the map $S_0$ on the
normalized pure states of one qubit given by
\begin{equation}
S_0 \ket{\psi} = \ket{0} \quad \forall \ket{\psi}
\end{equation}
Now, consider what happens if we apply $S_0$ to half of an EPR
pair $\ket{\Psi_{\mathrm{EPR}}}$. We can write the EPR state in two
equivalent ways 
\begin{eqnarray}
\ket{\Psi_{\mathrm{EPR}}} & = & \frac{1}{\sqrt{2}} \left( \ket{0}
  \ket{0} + \ket{1} \ket{1} \right) \label{11} \\
 & = & \frac{1}{\sqrt{2}} \left( \ket{+} \ket{+} +
   \ket{-} \ket{-} \right) \label{++}
\end{eqnarray}
where
\begin{equation}
\ket{\pm} = \frac{1}{\sqrt{2}} \left( \ket{0} \pm \ket{1} \right)
\end{equation}
Symbolically applying the rule $S_0 \ket{\psi} \mapsto \ket{0}$ to the
first tensor factor of \eq{11} yields $\ket{0} \ket{+}$, whereas
applying this rule to the first tensor factor of $\eq{++}$ yields
$\ket{0} \ket{0}$.

This example illustrates that one must specify additional information
beyond the action of a nonlinear map on a fixed Hilbert space in order
to obtain a well-defined extension to quantum theory incorporating the
notion of subsystems.

\section*{Appendix G: Open Problems}
\label{open}
\stepcounter{section}

We have shown that in several domains of modifications of
quantum mechanics, the resources required to observe
superluminal signaling or a speedup over Grover's algorithm are polynomially related. 
We extrapolate that this
relationship holds more generally, that is, in any quantum-like
theory, the Grover lower bound is derivable from the no-signaling
principle and vice-versa. 
A further hint in this direction is that, as shown in \cite{BHH11}, the limit on distinguishing non-orthogonal states in quantum mechanics is dictated by the no-signaling principle. Thus, any improvement over the Grover lower bound based on beyond-quantum state discrimination can be expected to imply some nonzero capacity for superluminal signaling.
There is a substantial literature on
generalizations of quantum mechanics which could be drawn upon to
address this question. In particular, one could consider the
generalized probabilistic theories framework of Barrett
\cite{Barrett05}, the category-theoretic framework of Abramsky and
Coecke \cite{AC08}, the Newton-Schr\"{o}dinger equation \cite{Ruffini}, quaternionic quantum mechanics \cite{Adler}, or the Papadodimas-Raju
state-dependence model of black hole dynamics \cite{PR14,MP15,H14}.
In these cases the investigation of computational and
communication properties is inseparably tied with the fundamental
questions about the physical interpretations of these
models. Possibly, such investigation could help shed light on these
fundamental questions. 

Our finding can be regarded as evidence
against the possibility of using black hole dynamics to efficiently
solve \textsf{NP}-complete problems, at least for problem instances of
reasonable size.
Note however that there are other
independent questions regarding the feasibility of computational
advantage through final-state projection and other forms of
non-unitary quantum mechanics. In particular, the issue of
fault-tolerance in modified quantum mechanics remains largely open,
although some discussion of this issue appears in \cite{Brun, Abrams_Lloyd,
  AaronsonNP}.
Also, while our results focus on the query complexity of search, in practice one also is interested
in the time complexity. Harlow and Hayden \cite{Harlow_Hayden} have argued that decoding the Hawking radiation emitted by a black hole may require exponential time on a quantum computer.
If the Harlow-Hayden argument is
correct, then exponential improvement in query complexity for search
does not imply exponential improvement in time-complexity. We
emphasize however that query complexity sets a lower bound on time
complexity, and therefore the reverse implication still holds, namely
exponential improvement in time complexity implies exponential
improvement in query complexity, which in the models we considered
implies superluminal signaling. 
Hence an operational version of the Grover lower bound can be derived from an operation version of the no-signaling principle.

\bibliography{supergrover}

\end{document}